\title{Relativistic Corrections to the Moyal-Weyl Spacetime}
\author{A. Much 
\\ \footnotesize{Faculty of Mathematics, University of Vienna, 1090 Vienna, Austria}}
\definecolor{darkred}{rgb}{0.7,0.0,0.0}
\newtheorem{theorem}{\textsc{Theorem}}[section]
\newtheorem{lemma}{\textsc{Lemma}}[section]
\theoremstyle{definition}
\newtheorem{definition}{\textsc{Definition}}[section]
\theoremstyle{remark}
\newtheorem{remark}{Remark}[section]
 \numberwithin{equation}{section} 
\begin{document}
\maketitle
\abstract{We use the framework of quantum field theory   to obtain  by deformation the  Moyal-Weyl spacetime.   This idea is extracted from recent progress in deformation theory concerning the emergence of the quantum plane of the Landau-quantization. The quantum field theoretical  emerging  spacetime is  not equal to the standard Moyal-Weyl plane, but terms resembling relativistic corrections occur.} 
  \tableofcontents

\section{Introduction}
The most prominent and researched example of quantized spacetimes is the case of constant noncommutativity, the so called Moyal-Weyl plane.
 Where did we get this plane from? Is it possible to derive it from anything we know, or is it just another experimentally non-realizable idea of theoretical physics?  \\\\  From a mathematical point of view, one might argue that the Moyal-Weyl plane is just  the coordinate generalization of the quantum-mechanical symplectic structure between the momentum and the coordinate.  From a  physical point of view and guided by simplicity, nontrivial  commutation relations of the coordinates emerge  from the fact  that localization with high accuracy at Planck scale causes a gravitational collapse, \cite{DFR}.   The  commutation relations found  are identified with those of the Moyal-Weyl.   \\\\
 Another {beautiful} physical point of view is the emergence of such a plane in the Landau-quantization. In this  effect the physical constant describing noncommutativity of  space is given by the inverse of a magnetic field, \cite{EZ, SZ}.\\\\ Nonetheless, a conceptual  mathematical derivation of such a spacetime would lead to  a deeper understanding of the physical nature of quantum-spacetimes.  Such a concept was developed for the quantum-mechanical (QM) case in \cite{Muc1}. It was done in a rigorous mathematical fashion by using the deformation technique known under warped convolutions \cite{BLS}. In the QM-case, we were able to identify the quantum plane of the Landau quantization by taking the commutator of the deformed coordinate operator. The generators used for the deformation of the coordinate operator are the momentum operators.   These techniques were also used rigorously in quantum measurement theory \cite{AA}.
\\\\ The idea developed in the QM-case is extended in this paper to a relativistic quantum field theoretical context. This is done by using the creation and annihilation operators of the free scalar field. Hence, we first define a quantum field theoretical version of an operator that is conjugate to the second-quantized momentum.
 This is done by taking the QM-definition using the momentum operator and performing a second-quantization onto the bosonic Fock space. Moreover, the definition is   extended to a temporal component and the deformed commutator of these operators is calculated. \\\\
 Now we recognized with regards to the spatial part of the conjugate operator an equivalence to the Newton-Wigner-Pryce (NWP) operator. In the context of relativistic particles, the NWP  operator is usually
mentioned as the rightful position  or center-of-mass operator. To obtain this  operator,
 certain physical requirements where imposed on localized states, \cite{NW49}, while in
\cite{PR48}  the same object was found by a relativistic generalization of the Newtonian definition of the
mass-center. However, localization in terms of the position operator is beset by problems regarding relativity. It is in clear conflict with relativistic covariance and causality in quantum physics. This fact can be verified by using the spectral theory of self-adjoint operators (see for example \cite[Theorem1]{Yng14}).\\\\
An  exit strategy concerning localization is to shift the focus to localized fields, where the spacetime coordinates are common variables of such fields. In this context, it is worth mentioning that a quantum field deformed with the coordinate operator belongs to the class of wedge-local fields (\cite[Proposition 5.4]{Muc2}). Wedge-locality is a weaker form of the common point-wise locality and it seems the appropriate notion of causality in noncommutative QFT, \cite{GL1, Sol}. \\\\ Although  the NW-localization violates relativistic covariance and causality, it holds at least in an approximate sense for distances of the order of the Compton wave length or smaller such as the Planck-length, \cite{Mund05} or \cite[Chapter 3c]{Sch}.  And it is precisely around the Planck-length scale where  noncommutative spacetimes are supposed to be non-negligible, \cite{LS, SZ}.
\\\\
Hence, following the QM-case the paper is organized as follows:  We first review the most important definitions of  warped convolutions. In Section \ref{s3}, we define the second quantized coordinate operator and prove  its equivalence to the NWP-operator. Moreover, we show deformations of such unbounded operators are mathematically well-defined. By calculating the deformed commutator of the coordinate operators, we yield terms resembling relativistic corrections to the standard Moyal-Weyl. 
 \section{Preliminaries }\label{s2}

\subsection{Warped Convolutions}
In  this section we write all   basic definitions and lemmas of the deformation known under the name of warped convolutions. For  proofs of the respective lemmas we refer the reader to the original works \cite{BLS, GL1}. \newline\newline 
The authors start their investigation with a $C^{*}$-dynamical system $(\mathcal{A},\mathbb{R}^d)$, \cite{P97}.
It consists of a $C^{*}$-algebra $\mathcal{A}$  equipped with a strongly continuous automorphic action of the group $\mathbb{R}^d$ which will be denoted by $\alpha$.
Furthermore, let  $\mathcal{B}(\mathscr{H})$ be the Hilbert space of bounded
operators on $\mathscr{H}$ and let  the adjoint action $\alpha$ be implemented  by the weakly continuous unitary representation $U$. Then,  it is argued that  since the unitary representation $U$ can be extended to the algebra $\mathcal{B}(\mathscr{H})$,   there is no lost of generality   when one proceeds to  the $C^{*}$-dynamical system  $(C^{*}\subset  \mathcal{B}(\mathscr{H}),\mathbb{R}^d)$.  Here $C^{*}\subset  \mathcal{B}(\mathscr{H})$  is the $C^{*}$-algebra of all operators on which $\alpha$ acts strongly continuously.
\newline\newline 
Hence, we start by assuming the
existence of a strongly continuous unitary group $U$ that is a representation of the additive
group $\mathbb{R}^{d}$, $d\geq2$, on some separable Hilbert space $\mathscr{H}$. 
Moreover, to define the deformation of operators belonging to a $C^{*}$-algebra $C^{*}\subset  \mathcal{B}(\mathscr{H})$, we consider
elements belonging to the subalgebra $C^{\infty}\subset C^{*} $. The subalgebra $C^{\infty}$ is
defined to be  the $*-$algebra of smooth elements (in the norm topology) with respect to $\alpha$, which
is the adjoint action of a weakly continuous
unitary representation $U$ of $\mathbb{R}^{d}$ given by  
\begin{equation*}
 \alpha_{x}(A)=U(x)\,A\,U(x)^{-1}, \quad x \in \mathbb{R}^{d}.
\end{equation*} 
Let $\mathcal{D}$ be the dense domain of vectors in $\mathscr{H}$ which transform
smoothly under the adjoint action of $U$.
Then, the warped convolutions  for operators  $A\in C^{\infty}$ are given by
the following definition.\\

\begin{definition}
 Let $\theta$ be a real skew-symmetric matrix on $\mathbb{R}^{d}$ w.r.t. the Minkowski metric, let $A\in C^{\infty}$  and let $E$ be
the spectral resolution of the unitary operator $U$.  Then, the corresponding \textbf{warped convolution} $A_{\theta}$
of $A$ is defined on the dense domain $\mathcal{D}$ according to
\begin{equation}\label{WC}
 A_{\theta }:=\int \alpha_{\theta x}(A)dE(x),
\end{equation}
where $\alpha$ denotes the adjoint action of $U$ given by $\alpha_{k}(A)=U(k)\,A\,U(k)^{-1}$.
\end{definition} 
The restriction in the choice of operators is owed to the fact that the deformation is performed
with operator valued integrals. Furthermore,  one can represent the warped
convolution of $
 {A} \in {C}^{\infty}$    by $\int \alpha_{\theta x}(A) dE(x)$, on the dense domain
$\mathcal{D}\subset\mathscr{H}$ of vectors smooth w.r.t. the action of $U$,  in terms of
strong limits 
\begin{equation*}
\int\alpha_{\theta x}(A) dE(x)\Phi=(2\pi)^{-d}
\lim_{\epsilon\rightarrow 0}
\iint  dx\, dy \,\chi(\epsilon x,\epsilon y )\,e^{-ixy}\, U(y)\, \alpha_{\theta x}(A)\Phi,  
\end{equation*}
where $\chi \in\mathscr{S}(\mathbb{R}^{d}\times\mathbb{R}^{d})$ with $\chi(0,0)=1$. In an intermediate step of showing this equivalence it was as well proven that $\mathcal{D}$ is \textbf{stable} under the deformation. 
The latter representation makes calculations and proofs concerning the existence of  integrals
easier. 
During this work we make use of both representations.  \\\\
In the following lemma we introduce the deformed product, known as the Rieffel product \cite{RI},
by using warped convolutions. The circumstance that the two are interrelated is due to
the fact that  warped convolutions supply isometric
representations of Rieffel's strict deformations of $C^{*}$-dynamical systems with
actions of $\mathbb{R}^{d}$.  The definition of the Rieffel product, given by warped convolutions, is used in the next chapter to calculate
the deformed commutators. \newline
\begin{lemma}\label{dpl}
Let $\theta$ be a real skew-symmetric matrix on $\mathbb{R}^{d}$  w.r.t. the Minkowski metric, $  {A},   {B} \in
 C^{\infty}$  and let $\Phi\in\mathcal{D}$. Then, the product of two deformed operators $A,   B$ is given by
\begin{equation*}
 {A}_{\theta}   {B}_{\theta} \Phi= (A\times_{\theta}B)_{\theta}\Phi,
\end{equation*}
where the \textbf{deformed product} $\times_{\theta}$ is  the Rieffel product
on $ {C}^{\infty}$ defined as, 
\begin{equation}\label{dp0}
(A\times_{\theta}B )\Phi=(2\pi)^{-d}
\lim_{\epsilon\rightarrow 0}
\iint dx\, dy\,\chi(\epsilon x,\epsilon y )e^{-ixy} \alpha_{\theta x}(A)\alpha_{y}(B)\Phi.
\end{equation}
\end{lemma}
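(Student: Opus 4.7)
My plan is to expand both sides of the identity using the oscillatory integral representation of the warped convolution displayed immediately above Lemma \ref{dpl}. Since the dense domain $\mathcal{D}$ is stable under the deformation, $B_{\theta}\Phi\in\mathcal{D}$ and the iterated action $A_{\theta}(B_{\theta}\Phi)$ is a well-defined fourfold oscillatory integral whose integrand contains the product $U(y_1)\alpha_{\theta x_1}(A)U(y_2)\alpha_{\theta x_2}(B)\Phi$ against the quadratic phase $e^{-i x_1 y_1 - i x_2 y_2}$ and two Schwartz cutoffs $\chi(\epsilon x_i,\epsilon y_i)$. The target on the right-hand side is of similar type, with $A\times_{\theta}B$ in place of the single operator and a single pair of integration variables, so the task is to reduce the fourfold integral to a twofold one of the desired form.

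The heart of the argument is a two-step manipulation of the integrand. First I would apply the elementary intertwining identity $\alpha_{z}(C)\,U(y)=U(y)\,\alpha_{z-y}(C)$, which is immediate from $U(y)\alpha_{z}(C)U(y)^{-1}=\alpha_{y+z}(C)$, to push $U(y_2)$ to the left of $\alpha_{\theta x_1}(A)$ and combine with $U(y_1)$ into $U(y_1+y_2)$, producing the integrand $U(y_1+y_2)\alpha_{\theta x_1-y_2}(A)\alpha_{\theta x_2}(B)\Phi$. Then I would perform the linear change of variables $y_1'=y_1+y_2$, $x=-\theta^{-1}y_2$, $y=\theta(x_2-x_1)$, keeping $x_1$ fixed; the Jacobian is unity and, using the skew-symmetry $\theta^{T}=-\theta$, the phase transforms as $x_1 y_1 + x_2 y_2 = x_1 y_1' + xy$, while the $\alpha$-indices become $\theta x_1-y_2=\theta(x_1+x)$ and $\theta x_2=\theta x_1+y$. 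The inner double integral in $(x,y)$ is then $\alpha_{\theta x_1}(A\times_{\theta}B)\Phi$ by the definition (\ref{dp0}) together with the homomorphism property of $\alpha$, and the remaining outer integral in $(x_1,y_1')$ is precisely the warped convolution $(A\times_{\theta}B)_{\theta}\Phi$, yielding the claimed identity.

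The main technical obstacle lies not in the algebraic rearrangement but in the rigorous passage to the limits $\epsilon\to 0$ and in the Fubini-type reordering of the nested oscillatory integrals. Because $\alpha_{\theta x}(A)\Phi$ is only a bounded and smoothly varying function of $x$, not a Schwartz function, the cutoffs $\chi$ must be retained throughout and absolute convergence justified by a dominated-convergence argument exploiting $A,B\in C^{\infty}$, $\chi\in\mathscr{S}$, and the stability of $\mathcal{D}$ under both $\alpha$ and warping. A secondary subtlety is that the displayed substitution presupposes invertibility of $\theta$; for a degenerate skew form one restricts to the quotient on which $\theta$ acts nondegenerately, or invokes the regularisation $\theta\to\theta+\lambda\theta'$ as in \cite{BLS}, before taking the limit.
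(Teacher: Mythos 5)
The paper itself contains no proof of this lemma: it is quoted as a known result, with the proofs explicitly deferred to the original works \cite{BLS, GL1}, so there is no internal argument to compare against. Judged on its own, your sketch reproduces the standard argument from those references, and its algebraic core is correct: the intertwining relation $\alpha_{z}(C)\,U(y)=U(y)\,\alpha_{z-y}(C)$, the substitution $y_1'=y_1+y_2$, $x=-\theta^{-1}y_2$, $y=\theta(x_2-x_1)$ with unit Jacobian, and the phase identity $x_1y_1+x_2y_2=x_1y_1'+xy$ (using skew-symmetry of $\theta$ w.r.t.\ the bilinear form in the phase) all check out, and the transformed integrand $e^{-ix_1y_1'-ixy}\,U(y_1')\,\alpha_{\theta(x_1+x)}(A)\,\alpha_{\theta x_1+y}(B)\Phi$ does formally reassemble into $(A\times_{\theta}B)_{\theta}\Phi$, since $\alpha_{\theta(x_1+x)}(A)\alpha_{\theta x_1+y}(B)=U(\theta x_1)\alpha_{\theta x}(A)\alpha_{y}(B)U(\theta x_1)^{-1}$ and $\mathcal{D}$ is invariant under $U$.

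Two caveats, which you flag but do not resolve, carry the real analytic weight. First, at finite $\epsilon$ the substitution entangles the cutoffs: $\chi(\epsilon x_1,\epsilon y_1)\chi(\epsilon x_2,\epsilon y_2)$ becomes $\chi\bigl(\epsilon x_1,\epsilon(y_1'+\theta x)\bigr)\chi\bigl(\epsilon(x_1+\theta^{-1}y),-\epsilon\theta x\bigr)$, which is not a product of a cutoff in $(x_1,y_1')$ with one in $(x,y)$; identifying the inner $(x,y)$-integral with $\alpha_{\theta x_1}(A\times_{\theta}B)\Phi$ and the outer integral with a warped convolution therefore requires knowing that the oscillatory integrals are independent of the admissible cutoff and that the nested $\epsilon$-limits can be interchanged, and also that $A\times_{\theta}B$ again lies in $C^{\infty}$ so that its warped convolution is defined; this is precisely the oscillatory-integral calculus supplied in \cite{BLS} and \cite{RI}, which your dominated-convergence remark gestures at but does not establish. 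Second, the substitution presupposes invertibility of $\theta$, while the lemma assumes only skew-symmetry; the cited proofs do not need this hypothesis, so your fallback (restriction to a nondegenerate subspace or a regularisation of $\theta$) would have to be worked out explicitly. As a blind reconstruction of the argument the paper delegates to the literature, the proposal is on target; as a self-contained proof it still rests on these unproven analytic inputs.
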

Next, we shall give the definition of a deformed commutator using the deformed product. 

\begin{definition}\label{dpl1}
Let $\theta$ be a real skew symmetric-matrix on $\mathbb{R}^{d}$, $  {A},   {B} \in
 {C}^{\infty}$ and let $\Phi\in\mathcal{D}$. Then, the \textbf{deformed commutator} is  defined as,
\begin{equation}\label{dp1}
[A\stackrel{\times_{\theta}}{,} B ]\Phi:= (A\times_{\theta}B-B\times_{\theta}A )\Phi.
\end{equation}
\end{definition}
The Rieffel product, i.e. an associative product,  was used for the definition  and hence the commutator has the standard properties. Furthermore, it is interesting to note that in \cite{BC}, a similar commutator was given and dubbed the Q-commutator.  It differs from our current definition only by an additional term, which is the commutative product of the two operators $  {A},   {B} \in
 {C}^{\infty}$.
\\\\ As a matter of fact, we intend to deform \textbf{unbounded operators} and hence we are obliged to prove  that the deformation
formula, given as an oscillatory integral, is well-defined.   This is done in Section \ref{s3}. 
\subsection{ Bosonic Fock 
space} The ($d=n+1 $)-dimensional relativistic  bosonic Fock 
space is defined in the following. A particle with momentum $\mathbf{p} \in \mathbb{R}^n$ has in   the massless case the energy $\omega_{\mathbf{p}}=+\sqrt{\mathbf{p}^2}$. Moreover the Lorentz-invariant measure is given by   $d^n\mu(\mathbf{p} )=d^n\mathbf{p}/( {2\omega_{\mathbf{p}}})$.
\begin{definition}\label{bf}
 The \textbf{bosonic Fock space} $\mathscr{F^{+}({H})}$ is defined 
as in \cite{BR}:
\begin{equation*}
\mathscr{F^{+}({H})}=\bigoplus_{k=0}^{\infty}\mathscr{H}_{k}^{+},
\end{equation*}
where $\mathscr{H}_{0}=\mathbb{C}$ and  the symmetric $k$-particle subspaces are given as
  \begin{align*}
\mathscr{H}_{k}^{+} =\{\Psi_{k}: \underbrace{\partial V^{+}  \times  \dots \times 
\partial V^{+}}_{k-times} \rightarrow \mathbb{C}\quad \mathrm{symmetric}
| \left\Vert  \Psi_k \right\Vert^2  <\infty\},
\end{align*}
with $
  \partial V^{+}:=\{p\in \mathbb{R}^{d}|p^2=0,p_0>0\}.$
  \end{definition}
  The particle annihilation and creation operators $a,a^{*}$ of the  bosonic
Fock space satisfy the following commutator relations
  \begin{align}\label{pccr}
[a_c(\mathbf{p}), a_c^{*}(\mathbf{q})]=2\omega_{\mathbf{p} 
}\delta^n(\mathbf{p}-\mathbf{q}), \qquad
[a_c(\mathbf{p}), a_c(\mathbf{q})]=0=[a_c^{*}(\mathbf{p}), a_c^{*}(\mathbf{q})].
\end{align} 
By using  $a_c,a_c^{*}$   the particle number operator and the momentum operator are  defined in the following manner.
 \begin{equation}\label{pcaopm}
N=\int d^n\mu(\mathbf{p}) a_c^{*}(\mathbf{p})a_c(\mathbf{p}), \qquad P_{\mu}=\int
d^n\mu(\mathbf{p})\,  p_{\mu} a_c^{*}(\mathbf{p})a_c(\mathbf{p}),
\end{equation}
where  $p_{\mu}=(\omega_{\mathbf{p}}, \mathbf{p})$. Moreover, we shall use  the terms of the annihilation and creation operators   in the noncovariant  representation,
\begin{equation*} 
 {a}(\textbf{p}):= \frac{{a}_c (\textbf{p})}{\sqrt{2\omega_{\mathbf{p}}}},\qquad  {a}^* (\textbf{p}):= \frac{{a}_c^* (\textbf{p})}{\sqrt{2\omega_{\mathbf{p}}}}.
\end{equation*} 
Throughout this work we use   $\mu, \,\nu=0,\dots,n$ and we shall use Latin letters for the spatial components which run from $1,\dots,n$. Moreover, we choose the following convention for the Minkowski scalar product of $ d $-dimensional vectors, $a\cdot b=a_0b^0+a_kb^k=a_0b^0- \vec{a}\cdot\vec{b}$.
\section{QFT-Moyal-Weyl from Deformation}\label{s3}
In the context of QM the deformation of the coordinate operator  with the
momentum operator gave as the quantum plane of  the Landau-quantization, (see \cite[Lemma 4.3]{Muc1}).
Since the commutator commutes with the generators of deformation, one can view the result as 
the deformed commutator of the coordinate operators, (see Definition \ref{dpl1}). This is the point of view taken in this paper. 
\\\\ Hence, in the present section
we follow the idea found in a QM context and calculate the  deformed commutator of the \textbf{quantum field theoretical}
conjugate operator  by using the momentum   for deformation. The resulting  quantum spacetime is called the \textbf{QFT-Moyal-Weyl spacetime}.
\subsection{Second-quantized Coordinate Operator}
 In the so called \textbf{Schr\"odinger
representation}, \cite{RS1, BEH} the pair of operators
$(P_{i},Q_{j})$, satisfying the \textbf{canonical commutation relations}  (CCR)
\begin{equation}\label{ccr}
[Q_i,P_{k}]=-i\eta_{ik},
\end{equation}
are represented as essentially self-adjoint operators on the dense domain
$\mathscr{S}(\mathbb{R}^n)$. Here $Q_{i}$ and $P_{k}$ are the closures of
$x_{i}$ and multiplication by $i {\partial}/{\partial x^k}$ on
$\mathscr{S}(\mathbb{R}^n)$ respectively. 
Now in order to give the spatial coordinate operator in a \textbf{QFT-context} we take the quantum mechanical unitary equivalence of the momentum and coordinate operator into account and perform a second quantization.
\begin{lemma}\label{xsa}
Let $U_{\mathscr{F}}$ be the unitary operator of the $n$-dimensional Fourier transform, which is given by the following action on a function $\Psi\in\mathscr{S}(\mathbb{R}^n)$,

\begin{equation}\label{ft}
(U_{\mathscr{F}}\Psi)(p):=  {(2\pi)^{-n/2}}\int\limits d^n\mathbf{x}\,\,
e^{ ip_kx^k}\Psi(x).
\end{equation}  
 Then, the Fourier transformation gives the unitary equivalence of operator $\mathbf{Q}$  to the operator $\mathbf{P}$ as follows,
\begin{equation}\label{uexp}
Q_{j}=U_{\mathscr{F}}^{-1}P_{j}U_{\mathscr{F}}.
\end{equation}
 $\mathbf{Q}$ is an essentially self-adjoint operator on the dense domain $ D(\mathbf{Q})=
U_{\mathscr{F}}^{-1}D(\mathbf{P})=U_{\mathscr{F}}^{-1}\mathscr{S}(\mathbb{R}^n)=\mathscr{S}
(\mathbb{R}^n)$. Furthermore, its second-quantization $d\Gamma(.)$ is an essential self-adjoint operator on the dense domain $D(\mathbf{Q})_{\otimes}\subset \mathscr{F ({H})}$, which is the set of $\psi=\{\psi_{0},\psi_{1},\cdots\}$ such
that
$\psi_{k}=0$ for $k$ large enough and $\psi_{k}\in \bigotimes_{i=1}^k D(\mathbf{Q})$ for each $k$. The second-quantization of $\mathbf{Q}$ is given as
\begin{equation}\label{uexp2}
X_j:= d\Gamma(Q_{j})=\Gamma(U_{\mathscr{F}}^{-1})P_{j}\Gamma(U_{\mathscr{F}}),
\end{equation} with $\Gamma(U_{\mathscr{F}})=\bigoplus_{k=0}^{\infty} U_{\mathscr{F}}^{\otimes k}$. 
\end{lemma}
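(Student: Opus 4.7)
The plan is to prove the statement in three moves: first establish the Fourier intertwining on the one-particle level, then bootstrap essential self-adjointness of $\mathbf{Q}$ (and subsequently of $d\Gamma(Q_j)$) from the well-known essential self-adjointness of $\mathbf{P}$, and finally derive the explicit formula for $X_j$ from the functoriality of second quantization.

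For the intertwining $Q_j = U_{\mathscr{F}}^{-1}P_j U_{\mathscr{F}}$ I would fix $\Psi\in\mathscr{S}(\mathbb{R}^n)$ and compute $(U_{\mathscr{F}}^{-1}P_j U_{\mathscr{F}})\Psi$ directly. Because $\Psi$ is Schwartz, the action of $P_j = i\partial_{p^j}$ may be taken inside the Fourier integral (2.6); after one integration by parts in $\vec p$ the kernel $e^{ip_kx^k}$ gets multiplied by a factor of $\pm x_j$, which reassembles through $U_{\mathscr{F}}^{-1}$ to $Q_j\Psi$. Since $U_{\mathscr{F}}$ is a bijection of $\mathscr{S}(\mathbb{R}^n)$ onto itself, both sides agree on the stated core and $D(\mathbf{Q}) = U_{\mathscr{F}}^{-1}\mathscr{S}(\mathbb{R}^n) = \mathscr{S}(\mathbb{R}^n)$ as claimed.

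Essential self-adjointness of $\mathbf{Q}$ is then immediate, since $\mathbf{P}$ is essentially self-adjoint on $\mathscr{S}(\mathbb{R}^n)$ by a standard textbook result \cite{RS1}, and this property is preserved under unitary conjugation when restricted to the image of the core. For the second-quantized version I would invoke the classical theorem that if $A$ is essentially self-adjoint on a dense domain $D\subset\mathscr{H}$, then $d\Gamma(A)$ is essentially self-adjoint on the algebraic symmetric tensor domain $D_{\otimes}\subset\mathscr{F}^{+}(\mathscr{H})$, which directly yields the stated essential self-adjointness of $X_j = d\Gamma(Q_j)$ on $D(\mathbf{Q})_{\otimes}$.

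The closed-form expression (2.8) then follows from the functorial identity $\Gamma(V)^{-1}\,d\Gamma(A)\,\Gamma(V)=d\Gamma(V^{-1}AV)$, valid for any unitary $V$ and any self-adjoint $A$; applying it to $V=U_{\mathscr{F}}$ and $A=P_j$ and combining with $Q_j=U_{\mathscr{F}}^{-1}P_j U_{\mathscr{F}}$ gives $X_j = \Gamma(U_{\mathscr{F}}^{-1})\,d\Gamma(P_j)\,\Gamma(U_{\mathscr{F}})$. A small bookkeeping item is to identify the Fock-space operator $P_j$ of (2.4) with $d\Gamma$ of the one-particle momentum operator, which is a short check in the momentum representation where $d\Gamma(P_j)$ acts multiplicatively on each $k$-particle sector. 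The main obstacle is not analytic depth but notational care: the symbol $P_j$ is overloaded across the one-particle space and Fock space, and the sign introduced by the Minkowski convention in the spatial Fourier kernel must be tracked consistently throughout; the analytic content reduces to routine Fourier and second-quantization machinery.
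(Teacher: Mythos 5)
Your proposal is correct and takes essentially the same route as the paper, whose proof simply cites the one-particle unitary equivalence (from the textbook reference), the essential self-adjointness of the second quantization on the finite-particle domain (Reed--Simon, Theorem VIII.33), and the abuse-of-notation identification of $P_j$ with its second quantization via functoriality of $\Gamma$ --- precisely the standard facts you spell out. The only cosmetic slip is in your first step: the intertwining relation comes from differentiating under the integral in (\ref{ft}), producing the factor $\pm x_j$ directly, with no integration by parts in $p$ needed; this does not affect the argument.
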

\begin{proof}
 The theorem for the one-particle subspace can be found in \cite[Equation 7.4]{BEH}.
Now  the self-adjointness of the second quantized operator can be found in \cite[Theorem VIII.33]{RS1} and \cite[Chapter X.7]{RS}. By a slight abuse of notation we use the same symbol for the second-quantized momentum operator.
\end{proof}

\begin{remark}\label{ru}
 Note that the quantum mechanical equivalence is demonstrated for a one-particle subspace w.r.t. Lebesgue measure.  To change the Lebesgue   to the Lorentz-invariant measure of QFT one combines the former with a multiplication operator. i.e. $(2P_0)^{-1}$. Hence, to have unitarity of the Fourier transformation w.r.t. the  Lorentz-invariant measure we combine it with the unitary  multiplication operator as follows. We consider the adjoint action of the operator $U_{\mathscr{F}}$ on the annihilation operator smeared  with a function $f \in\mathscr{H}_1 $,  
\begin{align*}
(\Gamma(U_{\mathscr{F}})a (\overline{f})\Gamma(U_{\mathscr{F}})^{-1})&=
\int \frac{d^n \mathbf{p}}{\sqrt{2\omega_{\mathbf{p}}}}
f(\mathbf{p})(\Gamma(U_{\mathscr{F}}) {a} (\mathbf{p})\Gamma(U_{\mathscr{F}})^{-1})
\\&=
\int \frac{d^n \mathbf{p}}{\sqrt{2\omega_{\mathbf{p}}}}
f(\mathbf{p}) \left( { {U}}_{\mathscr{F}} {a} \right)(\mathbf{p})
\\&=
\int \frac{d^n \mathbf{p}}{\sqrt{2\omega_{\mathbf{p}}}}\underbrace{\left(
\sqrt{2\omega} {U}_{\mathscr{F}}\frac{1}{\sqrt{2\omega}}
f\right)}_{=: {U}_{\mathscr{F},\omega}f}(\mathbf{p})  {a} (\mathbf{p}) = a(\overline{{U}_{\mathscr{F},\omega}f} ).
\end{align*}
The operator ${U}_{\mathscr{F},\omega}$ is unitary on the one-particle subspace w.r.t. the Lorentz-invariant measure.

\end{remark}

\begin{lemma}\label{xpr}
The coordinate operator has the following bosonic Fock space representation  
\begin{equation}\label{uexp3}
X_j=-i\int d^n \mathbf{p}\,  {a}^*(\textbf{p}) \frac{\partial}{\partial p^j}  {a}(\textbf{p}).
\end{equation}
Moreover, let the dense domain  $D(\mathbf{Q})_{\otimes}^{+}\subset \mathscr{F^{+} ({H})}$ be the projection onto the symmetric subspace of $D(\mathbf{Q})_{\otimes}$. Then, the second-quantized operator in the bosonic Fock space representation is an essential self-adjoint operator on $D(\mathbf{Q})_{\otimes}^{+}$.
\end{lemma}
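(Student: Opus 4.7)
The plan is to unpack the definition $X_j=d\Gamma(Q_j)=\Gamma(U_{\mathscr{F}}^{-1})P_j\Gamma(U_{\mathscr{F}})$ at the level of creation and annihilation operators, and then recognise the resulting expression as an ordinary second quantization written in the noncovariant pair $a(\mathbf{p}),a^{*}(\mathbf{p})$. The point is that these noncovariant operators satisfy the Lebesgue-type canonical relations $[a(\mathbf{p}),a^{*}(\mathbf{q})]=\delta^{n}(\mathbf{p}-\mathbf{q})$, so the Fock space $\mathscr{F}^{+}(\mathscr{H})$ looks, in these variables, exactly like a standard bosonic Fock space built over $L^{2}(\mathbb{R}^{n},d^{n}\mathbf{p})$. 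In particular, the general second-quantization identity $d\Gamma(A)=\int d^{n}\mathbf{p}\, a^{*}(\mathbf{p})(A\,a)(\mathbf{p})$ is available for any self-adjoint one-particle operator $A$.

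First I would rewrite $P_{j}$ in the noncovariant basis, using $a_{c}=\sqrt{2\omega}\,a$, which absorbs the factor $1/(2\omega_{\mathbf{p}})$ in the Lorentz-invariant measure and produces $P_{j}=\int d^{n}\mathbf{p}\, p_{j}\,a^{*}(\mathbf{p})a(\mathbf{p})$. Then I would compute the adjoint action $\Gamma(U_{\mathscr{F}}^{-1})a(\mathbf{p})\Gamma(U_{\mathscr{F}})$ in the manner of Remark \ref{ru}: on the one-particle level it is the Fourier transform (adjusted by $\sqrt{2\omega}$ to remain unitary with respect to $d^{n}\mu$), so that conjugation of $P_{j}$ by $\Gamma(U_{\mathscr{F}})$ amounts to replacing the operator of multiplication by $p_{j}$ with its Fourier conjugate. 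By Lemma \ref{xsa} this conjugate is precisely $Q_{j}=-i\partial/\partial p^{j}$ on $\mathscr{S}(\mathbb{R}^{n})$. Substituting into the noncovariant integral form, or equivalently integrating by parts against $\partial_{p^{j}}\delta^{n}(\mathbf{p}-\mathbf{q})$ in the kernel representation $d\Gamma(Q_{j})=\int d^{n}\mathbf{p}\,d^{n}\mathbf{q}\,\langle\mathbf{p}|Q_{j}|\mathbf{q}\rangle\,a^{*}(\mathbf{p})a(\mathbf{q})$, yields the asserted formula
\begin{equation*}
X_{j}=-i\int d^{n}\mathbf{p}\,a^{*}(\mathbf{p})\,\frac{\partial}{\partial p^{j}}\,a(\mathbf{p}).
\end{equation*}

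Essential self-adjointness is then an immediate consequence of Lemma \ref{xsa} combined with the general second-quantization theorem of Reed--Simon (Theorem VIII.33), which guarantees that the second quantization of an essentially self-adjoint one-particle operator is essentially self-adjoint on the algebraic tensor domain. The bosonic subspace $\mathscr{F}^{+}(\mathscr{H})$ is invariant under $d\Gamma(Q_{j})$, since $Q_{j}\otimes I+I\otimes Q_{j}$ commutes with symmetrisation; hence the restriction to $D(\mathbf{Q})_{\otimes}^{+}$ inherits essential self-adjointness.

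The main obstacle I expect is bookkeeping rather than conceptual: one must carefully track the two different measures (Lebesgue on $\mathscr{S}(\mathbb{R}^{n})$ used to define $U_{\mathscr{F}}$ and $Q_{j}$, versus the Lorentz-invariant $d^{n}\mu$ underlying the Fock construction), so that the factors of $\sqrt{2\omega_{\mathbf{p}}}$ from Remark \ref{ru} cancel correctly against the conversion $a_{c}\leftrightarrow a$, and check that the sign convention in the Fourier transform \eqref{ft} is consistent with $Q_{j}=-i\partial/\partial p^{j}$ under the Minkowski pairing $p_{k}x^{k}$ declared at the end of Section \ref{s2}. Once these normalisations are aligned, the formula and the self-adjointness statement both drop out.
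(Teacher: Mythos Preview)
Your proposal is correct and follows essentially the same route as the paper: start from $X_j=\Gamma(U_{\mathscr{F}}^{-1})P_j\Gamma(U_{\mathscr{F}})$, write $P_j$ in the noncovariant $a,a^{*}$ form, push the Fourier conjugation through to turn multiplication by $p_j$ into $-i\partial_{p^j}$, and invoke \cite[Theorem VIII.33]{RS1} for essential self-adjointness on the symmetric tensor domain. The paper carries out the Fourier step by an explicit integral manipulation (writing $x_j e^{ip_kx^k}=-i\partial_{p^j}e^{ip_kx^k}$ and reassembling), whereas you phrase it via the abstract identity $d\Gamma(A)=\int a^{*}(A a)$ or the kernel representation, but these are the same computation; your caution about the $\sqrt{2\omega}$ bookkeeping is exactly what the paper addresses in Remark~\ref{fu}.
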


\begin{proof} The proof for the Fock space representation (see \cite{SS}) can be found in \cite[Theorem 5.1]{Muc2}. Nevertheless, we outline it in more detail to ease to the reader the proof of the next somewhat more complicated lemma. 
 Since we intend to represent the second quantized coordinate operator on the symmetric Fock space, we use the momentum operator given in Equation (\ref{pcaopm}) and calculate the unitary equivalence explicitly on the creation and annihilation operators. Also note that the second quantized operator of the Fourier transformation commutes with the projections on to the symmetric Fock space (see for example \cite[Chapter VIII.10, Example 2]{RS1}). Therefore, we calculate the action of $\Gamma(U_{\mathscr{F}}^{-1})$ directly on the momentum operator in its  symmetric Fock space representation, i.e. 
\begin{align*} 
 X_j&=\Gamma(U_{\mathscr{F}}^{-1})P_{j}\Gamma(U_{\mathscr{F}})=  \int
d^n \mathbf{p}\,  p_{j}  {(\hat{U}^{-1}_{\mathscr{F}}{a} )}^{*}(\mathbf{p})(\hat{U}^{-1}_{\mathscr{F}}{a})(\mathbf{p})\\&= 
\int
d^n \mathbf{p}\,{a}^{*}(\mathbf{p})\left(\hat{U}_{\mathscr{F}} p_{j}  \underbrace{(\hat{U}^{-1}_{\mathscr{F}}{a} )}_{=:\tilde{a}} \right)(\mathbf{p}) \\&= \int
d^n \mathbf{p}\,{a}^{*}(\mathbf{p})\left(
(2\pi)^{-n/2} \int d^n\mathbf{x}\,e^{ip_k x^k}x_j\,\tilde{a}(\mathbf{x})
 \right) \\
&= -i(2\pi)^{-n/2}\int
d^n \mathbf{p}\,{a}^{*}(\mathbf{p})\left(
\frac{\partial}{\partial p^j} \int d^n\mathbf{x}\,
e^{ip_k x^k} \,\tilde{a}(\mathbf{x})
 \right)  \\&=   -i \int
d^n \mathbf{p}\,{a}^{*}(\mathbf{p}) 
\frac{\partial}{\partial p^j} {a} (\mathbf{p}), 
\end{align*}
where for the Fourier transform we used \cite[Lemma IX.1]{RS}. The essential self-adjointness of the second-quantized operator on $D(\mathbf{Q})_{\otimes}^{+}$  follows from 
\cite[Theorem VIII.33]{RS1}.
\end{proof} 
\begin{remark}\label{fu}
The proof for the representation of the coordinate operator was done by using the unitarity of the Fourier transformation w.r.t. the Lebesgue measure, since the momentum operator was given in the noncovariant representation.  However, the proof can be done analogously by using the momentum operator in covariant representation and the operator ${U}_{\mathscr{F},\omega}$  for the equivalence. This is merely a question of covariant versus noncovariant representation.
\end{remark}

Which quality of $\mathbf{X}$ makes it an appropriate coordinate operator? Well, first of all it is an essentially self-adjoint operator which is the second quantization of the QM-coordinate. Moreover in the following we stage the equivalence to the Newton-Wigner-Pryce operator (NWP). The NWP operator can  be given by the product of generators
of the Poincar\'e group in the following way, \cite{  BAC2, BERG65, CA65, J80}
\begin{equation*}
Q_{j}^{NWP}= \frac{1}{2P_0}M_{0j}+M_{0j}\frac{1}{2P_0}.
\end{equation*} 
On a one particle wave function the position operator is given by the following action,
\cite[Chapter 3c, Equation 35]{Sch}
\begin{equation}\label{NWP}
(Q_{j}^{NWP}\varphi)(\mathbf{p})=-i \left( \frac{p_j}{2\omega_{\mathbf{p}}^2}
+   \frac{\partial}{\partial p^j } 
\right)\varphi(\mathbf{p}).
\end{equation}
 
\begin{lemma}\label{enwp}
The second quantization of the \textbf{Newton-Wigner-Pryce}  operator onto the bosonic Fock space, defined on the dense domain  $D(\mathbf{Q})_{\otimes}^{+}\subset \mathscr{F^{+} ({H})}$,  is given in terms of the covariant annihilation and creation operators as follows, 
\begin{equation*}
 X_{j}^{NWP} =
 -i \int
d^n \mu(\mathbf{p})\,
 {{a}^{*}_c (\textbf{p})} 
  \left(\frac{p_j}{2\omega_{\mathbf{p}}^2} 
   +  \frac{\partial}{\partial p^j} \right){a}_c (\textbf{p}).
\end{equation*}
Moreover, it is equivalent to the position
operator $X_j$ given in Equation (\ref{uexp3}).
\end{lemma}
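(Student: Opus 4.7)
The plan is to split the claim into two parts: first, derive the stated Fock space representation of the second quantization of the NWP operator from the one-particle action (\ref{NWP}); second, show this operator coincides with the $X_j$ of Lemma \ref{xpr}.

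For the first part, I would appeal to the standard fact that the second quantization of any essentially self-adjoint one-particle operator $A$ on $\mathscr{H}_1 = L^2(\partial V^+,d^n\mu)$ admits the formal kernel
\begin{equation*}
d\Gamma(A) = \int d^n\mu(\mathbf{p})\, a^*_c(\mathbf{p})\,(A a_c)(\mathbf{p}),
\end{equation*}
which follows directly from the covariant commutation relations (\ref{pccr}). Feeding the one-particle expression (\ref{NWP}) for $Q_j^{NWP}$ into this formula immediately yields the claimed representation of $X_j^{NWP}$, and essential self-adjointness on $D(\mathbf{Q})_\otimes^+$ follows, as in Lemma \ref{xpr}, from \cite[Theorem VIII.33]{RS1}.

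For the equivalence with $X_j$, I would substitute $a(\mathbf{p}) = a_c(\mathbf{p})/\sqrt{2\omega_{\mathbf{p}}}$ into the representation (\ref{uexp3}) and apply the product rule to $\partial_j\bigl(a_c(\mathbf{p})/\sqrt{2\omega_{\mathbf{p}}}\bigr)$. The kinetic term $\frac{a^*_c}{2\omega}\partial_j a_c$ combines with $d^n\mathbf{p}$ into $\int d^n\mu\, a^*_c\partial_j a_c$, reproducing the derivative piece of $X_j^{NWP}$. The mass-shell correction comes from
\begin{equation*}
\frac{1}{\sqrt{2\omega}}\,\partial_j\!\left(\frac{1}{\sqrt{2\omega}}\right) = \frac{1}{2}\,\partial_j\!\left(\frac{1}{2\omega}\right) = -\frac{p^j}{4\omega^3},
\end{equation*}
using $\partial_j\omega_{\mathbf{p}} = p^j/\omega_{\mathbf{p}}$. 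After absorbing one factor of $(2\omega)^{-1}$ into $d^n\mu$, this produces a term $(i/2)\int d^n\mu\,(p^j/\omega^2)\,a^*_c a_c$, which matches the $-i\,p_j/(2\omega^2)$ term in $X_j^{NWP}$ once one invokes the paper's Minkowski convention $p_j = -p^j$ on spatial indices.

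The main obstacle is not conceptual but one of careful bookkeeping: one must simultaneously manage the change of measure (Lebesgue vs.\ Lorentz-invariant), the rescaling between covariant and noncovariant creation/annihilation operators, and the index placement $p^j$ versus $p_j$ induced by the metric. Since the correction term is a multiplication operator that leaves $D(\mathbf{Q})_\otimes^+$ invariant, no additional domain analysis beyond what is already established in Lemma \ref{xpr} is required, and the identity of the two operators on this common core yields the stated equivalence.
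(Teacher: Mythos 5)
Your proposal is correct and follows essentially the same route as the paper: the covariant kernel formula for $d\Gamma(Q_j^{NWP})$ gives the first claim, and the equivalence is obtained, exactly as in the paper's proof, by substituting $a(\mathbf{p})=a_c(\mathbf{p})/\sqrt{2\omega_{\mathbf{p}}}$ into (\ref{uexp3}), applying the product rule, and converting $d^n\mathbf{p}$ to $d^n\mu(\mathbf{p})$. Your explicit bookkeeping of the factor $-p^j/(4\omega^3)$ and of the index convention $p_j=-p^j$ only spells out steps the paper leaves implicit.
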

\begin{proof}
The first statement is fairly obvious by taking the action of the NWP operator (see Equation (\ref{NWP})) on the the dense domain $ D(\mathbf{Q})$ and performing a second quantization onto the bosonic Fock space. Now the equivalence can be as well  easily seen by rewriting the spatial conjugate operator in terms of the covariant annihilation and creation operators, i.e. 
\begin{align*} 
 X_j&=     -i \int
d^n \mathbf{p}\,{a}^{*}(\mathbf{p}) 
\frac{\partial}{\partial p^j} {a} (\mathbf{p}) \\&=
 -i \int
d^n \mathbf{p}\,
\frac{{a}^{*}_c (\textbf{p})}{\sqrt{2\omega_{\mathbf{p}}}}
\frac{\partial}{\partial p^j}  \left(
\frac{{a}_c (\textbf{p})}{\sqrt{2\omega_{\mathbf{p}}}}\right)\\&=
 -i \int
d^n \mu(\mathbf{p})\,
 {{a}^{*}_c (\textbf{p})} 
  \left(\frac{p_j}{2\omega_{\mathbf{p}}^2} 
   +  \frac{\partial}{\partial p^j} \right){a}_c (\textbf{p}),
\end{align*} 
Hence, the second quantization of the NWP operator, given by $d\Gamma(X_{j}^{NWP})$ onto the bosonic Fock space, is equivalent to $X_j$.
\end{proof} 
 A few comments are in order. First, the Newton-Wigner-Pryce operator is often given as the
product of generators of the Poincar\'e group. This representation is only true for one-particle
states and not for $n$-particle states. The reason   is the simple observation that the product of second quantized
operators is not equal to the second quantized product of the operators, i.e.  $d\Gamma(M_{0j}
P_{0}^{-1})\neq d\Gamma(M_{0j})d\Gamma(P_{0}^{-1})$.   Therefore, the representation of the
NWP-operator as the product of the boost operator and the inverse of the Hamiltonian must be
discarded for an $n$-particle system.\newline\newline Second, from Lemma \ref{enwp} it follows
that the Newton-Wigner-Pryce
operator is equivalent to the position operator that we also obtained by   second quantization
of the quantum mechanical  position operator.   One reason why this fact may have not been
obvious, is owed to the representation of the operator. The second quantized coordinate operator  is
given in a non-covariant manner, while the NWP-operator is given in a covariant fashion. Of
course, the
difference of representation is merely a normalization feature and thus for the physical
interpretation, and specially for the second quantization not relevant.
 \\\\ Since in QFT we work in a relativistic setting we   \textbf{define} the zero component of the coordinate operator in the same fashion as the spatial part, i.e. 
\begin{equation}\label{x0}
 X_0:=\Gamma(U_{\mathscr{F}}^{-1})\,P_{0}\,\Gamma(U_{\mathscr{F}}).
\end{equation}
Note that the domain of essential self-adjointness of $P_0$ on the one-particle subspace is $\mathscr{S}
(\mathbb{R}^n)$. 
\begin{lemma} The operator  $X_0$   is given as an essentially self-adjoint operator on the dense domain    $ D(\mathbf{Q})_{\otimes}\subset\mathscr{F ({H})}$  and has the following bosonic Fock space representation  
\begin{equation}\label{uexpm3}
X_0=c_{d}
 \int
d^n \mathbf{p} \, 
{a}^*(\mathbf{p})   \left({\omega}^{ -{d}}  \ast
 {a}\right) (\mathbf{p} ),
\end{equation}
with  constant  $c_{d}:=-\pi^{-\frac{d}{2}}\Gamma(\frac{d}{2})$ and $ \ast$ denoting the convolution. Moreover, the second-quantized operator $X_{0}$ in the bosonic Fock space representation is an essential self-adjoint operator on $D(\mathbf{Q})_{\otimes}^{+}$.
 \end{lemma}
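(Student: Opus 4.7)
The plan is to mirror the proof of Lemma \ref{xpr}, with the multiplication operator $p_j$ replaced by $\omega_{\mathbf{p}}=|\mathbf{p}|$. Starting from the Fock representation $P_0=\int d^n\mathbf{p}\,\omega_{\mathbf{p}}\,a^{*}(\mathbf{p})\,a(\mathbf{p})$ in the non-covariant normalization, the identity $\Gamma(U_{\mathscr{F}})\,a(\mathbf{p})\,\Gamma(U_{\mathscr{F}})^{-1}=(\hat{U}_{\mathscr{F}} a)(\mathbf{p})$ exploited in that proof yields
$$X_0=\int d^n\mathbf{p}\,\omega_{\mathbf{p}}\,(\hat{U}^{-1}_{\mathscr{F}}a)^{*}(\mathbf{p})(\hat{U}^{-1}_{\mathscr{F}}a)(\mathbf{p})=\int d^n\mathbf{p}\,a^{*}(\mathbf{p})\,(\hat{U}_{\mathscr{F}}\,\omega\,\hat{U}^{-1}_{\mathscr{F}}a)(\mathbf{p}),$$
where in the second equality I transport the Fourier transforms onto the right-hand annihilation operator, so that multiplication by $\omega_{\mathbf{p}}$ is conjugated by $\hat{U}_{\mathscr{F}}$, i.e.\ realised as a convolution operator on momentum space.

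The core calculation is to identify this convolution kernel. Writing $\hat{U}_{\mathscr{F}}\,\omega\,\hat{U}^{-1}_{\mathscr{F}}$ as an iterated integral, the kernel is, up to the overall factor $(2\pi)^{-n}$ coming from the two Fourier factors, the tempered-distribution Fourier transform of $|\mathbf{x}|$ on $\mathbb{R}^{n}$. Applying the standard Riesz-distribution formula
$$\mathcal{F}[|\mathbf{x}|^{\beta}](\mathbf{k})=\frac{2^{n+\beta}\pi^{n/2}\,\Gamma((n+\beta)/2)}{\Gamma(-\beta/2)}\,|\mathbf{k}|^{-n-\beta}$$
at $\beta=1$, and using $\Gamma(-1/2)=-2\sqrt{\pi}$ together with $d=n+1$, a routine arithmetic simplification collapses the $(2\pi)^{-n}$, the powers of $2$, and the remaining powers of $\pi$ into the single prefactor $c_d=-\pi^{-d/2}\Gamma(d/2)$, yielding the convolution kernel $c_d\,|\mathbf{p}-\mathbf{q}|^{-d}$. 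Substituting back produces the claimed representation $X_0=c_d\int d^n\mathbf{p}\,a^{*}(\mathbf{p})\,(\omega^{-d}*a)(\mathbf{p})$.

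For the self-adjointness statement, note that by (\ref{x0}) the single-particle operator $Q_0:=U^{-1}_{\mathscr{F}}P_0U_{\mathscr{F}}$ is unitarily equivalent via $U_{\mathscr{F}}$ to the multiplication operator $\omega_{\mathbf{p}}=|\mathbf{p}|$, which is essentially self-adjoint on $\mathscr{S}(\mathbb{R}^n)$; hence $Q_0$ is essentially self-adjoint on $U^{-1}_{\mathscr{F}}\mathscr{S}(\mathbb{R}^n)=\mathscr{S}(\mathbb{R}^n)=D(\mathbf{Q})$. Essential self-adjointness of $X_0=d\Gamma(Q_0)$ on $D(\mathbf{Q})_{\otimes}\subset\mathscr{F}(\mathscr{H})$ and on the symmetric subdomain $D(\mathbf{Q})_{\otimes}^{+}$ then follows from \cite[Theorem VIII.33]{RS1}, exactly as in the last line of the proof of Lemma \ref{xpr}. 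The main subtlety I anticipate is that the kernel $|\mathbf{p}-\mathbf{q}|^{-(n+1)}$ fails to be locally integrable at coincidence, so the convolution $\omega^{-d}*a$ must be read as a tempered-distribution pairing rather than as an absolutely convergent integral; the stability of $\mathscr{S}(\mathbb{R}^n)$ under Fourier transform, together with the smoothness of elements of $D(\mathbf{Q})_{\otimes}^{+}$, makes this pairing well-defined on the relevant dense domain.
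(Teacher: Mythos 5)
Your proposal is correct and follows essentially the same route as the paper: conjugating the multiplication operator $\omega$ by the (second-quantized) Fourier transform on the level of creation/annihilation operators, identifying the resulting convolution kernel as the distributional Fourier transform of $|\mathbf{x}|$ (the paper cites Gelfand--Shilov where you invoke the Riesz formula, with the same prefactor $c_d$), and obtaining essential self-adjointness from the unitary equivalence to $P_0$ plus \cite[Theorem VIII.33]{RS1}. Your closing remark that the kernel $|\mathbf{p}-\mathbf{q}|^{-(n+1)}$ must be read as a regularized tempered distribution is a sound observation the paper leaves implicit, but it does not change the argument.
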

\begin{proof}As for the spatial part the argument of essential self-adjointness is based on the unitary equivalence, given by the second quantized Fourier transformation, to the essential self-adjoint operator $P_0$. Let us now turn to the result in terms of the massless  creation and annihilation operators, 
\begin{align*} 
 X_0&=\Gamma(U_{\mathscr{F}}^{-1})P_{0}\Gamma(U_{\mathscr{F}})= \int
d^n \mathbf{p}\,  \omega_{\mathbf{p}}  {(\hat{U}^{-1}_{\mathscr{F}}{a} )}^{*}(\mathbf{p})(\hat{U}^{-1}_{\mathscr{F}}{a})(\mathbf{p})\\&=  \int
d^n \mathbf{p}\,{a}^{*}(\mathbf{p})\left(\hat{U}_{\mathscr{F}} {|\mathbf{p}|}{\tilde{a}}  \right)(\mathbf{p}) \\& =\int
d^n \mathbf{p}\,{a}^{*}(\mathbf{p})\left(
(2\pi)^{-n/2} \int d^n\mathbf{x}\,e^{ip_k x^k}  {|\mathbf{x}|}\,\tilde{a}(\mathbf{x})
 \right)
\\
 &=(2\pi)^{-n}
\int
d^n \mathbf{p}\,{a}^{*}(\mathbf{p})\left(
 \int d^n\mathbf{x}\,e^{ip_k x^k} {|\mathbf{x}|}\, \int d^n\mathbf{q} \,e^{-iq_k x^k} {a}(\mathbf{q})
 \right)\\ &=(2\pi)^{-n}
\iint
d^n \mathbf{p} \,d^n\mathbf{q}\,{a}^{*}(\mathbf{p})\left(
 \int d^n\mathbf{x}\,e^{i(p-q)_k x^k}{|\mathbf{x}|}\,   \right)  {a}(\mathbf{q})\\&=-\pi^{-\frac{n+1}{2}}\Gamma(\frac{n+1}{2})
\iint
d^n \mathbf{p} \,d^n\mathbf{q}\,{{|\mathbf{p}-\mathbf{q}|}^{ -(n+1)}}{a}^{*}(\mathbf{p})  {a}(\mathbf{q})
 ,
\end{align*}where the Fourier transformation performed in the last step can be found in \cite[Chapter III, Section 2.6]{GS1}. As for the spatial part,  essential self-adjointness of $X_{0}$ on $D(\mathbf{Q})_{\otimes}^{+}$  follows from \cite[Theorem VIII.33]{RS1}.
\end{proof}

\subsection{Translations of the Conjugate Operators}
Before deforming the conjugate operator,  we  calculate the unitary transformation of these objects  under
 translations. This is done on an operator-valued distributional level  by considering the
unitary transformation of the translations on  particle creation and annihilation operators. 
Let us first define the unitary operator of translations as follows,
\begin{equation}\label{bp}
 U(b):=e^{ib_{\mu}P^{\mu}}.
\end{equation}
The unitary operator $ U(b)$ with $b\in\mathbb{R}^{d}$ transforms the particle creation
and annihilation operators $a,a^*$ in the following way, \cite{Sch}
\begin{equation}\label{caut}
 U(b)\, {a}(\mathbf{p}) U(b)^{-1}=
e^{-ib_{\mu}p^{\mu}}{a}(\mathbf{p}), \qquad  U(b) \,{a}^{*}(\mathbf{p})
U(b)^{-1}=
e^{ i b_{\mu}p^{\mu}}a^{*} (\mathbf{p}).
\end{equation}
The transformation property (\ref{caut})  is used in the next lemma to calculate the unitary
transformations of the coordinate operator. \newline$\,$
\begin{lemma}\label{utco}
Let $V_j$ be the velocity operator  given as 
\begin{equation}\label{vel}
V_j= \int d^n\mathbf{p}\,\frac{p_{j}}{\omega_{\textbf{p}}}\,{a}^*(\mathbf{p})
{a}(\mathbf{p}).
\end{equation}
 Then, under the adjoint action of the unitary transformation $U(b):=e^{ib_{\mu}P^{\mu}}$, with
$b\in\mathbb{R}^{ {d}}$,   the  coordinate operators transform in the following way

 \begin{align}\label{x0p}\nonumber
\alpha_{ b }\left(X_{0}\right)&=  c_n
\iint
d^n \mathbf{p} \,d^n\mathbf{q}\, {e^{ib_{\mu}(p-q )^{\mu}}}{\omega_{\mathbf{p}-\mathbf{q}}^{ - d}}{a}^{*}(\mathbf{p})  {a}(\mathbf{q})
\\&=: X_0- b_\mu \widetilde{V}^{\mu}+\mathcal{O}(b^2),
\end{align} 
 \begin{align}\label{xip}
\alpha_{ b }\left(X_{j}\right)&= e^{ib_{\mu}P^{\mu}}X_{j}
e^{-ib_{\mu}P^{\mu}}=X_{j}
+b_{0}V_j
 - b_{j}N.
\end{align} 
Moreover, the spatial part of  $\widetilde{V}^{\mu}$ is the Fourier transform of the velocity operator and  the second quantization of $\mathbf{Q}\cdot |\mathbf{Q}|^{-1}$, i.e.  
\begin{equation*}
 \widetilde{\mathbf{V}} =-d\Gamma(\mathbf{Q}\cdot |\mathbf{Q}|^{-1}).
\end{equation*}

\end{lemma}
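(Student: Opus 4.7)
The overall strategy is to transport the adjoint action $\alpha_b = U(b)\cdot U(b)^{-1}$ inside the Fock-space integrals defining $X_j$ and $X_0$ and to unravel the resulting phases by means of the transformation rule (\ref{caut}). Since $U(b)$ is a strongly continuous unitary representation on $\mathscr{F}^{+}(\mathscr{H})$ that preserves the dense domain $D(\mathbf{Q})_{\otimes}^{+}$, these manipulations may be carried out at the operator-valued distributional level.

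I would first treat the spatial component, starting from the bosonic Fock space representation (\ref{uexp3}) of $X_j$. The momentum derivative $\partial/\partial p^j$ commutes with $\alpha_b$, so
\[
\alpha_b(X_j) = -i\int d^n\mathbf{p}\, e^{ib_\mu p^\mu}\, a^{*}(\mathbf{p})\, \frac{\partial}{\partial p^j}\bigl(e^{-ib_\mu p^\mu}a(\mathbf{p})\bigr).
\]
The product rule, together with $\partial\omega_{\mathbf{p}}/\partial p^j = p^j/\omega_{\mathbf{p}}$, splits the integrand into two pieces. The derivative falling on $a(\mathbf{p})$ cancels the phases and reproduces $X_j$ exactly; the derivative falling on the phase contributes $-b_0\!\int d^n\mathbf{p}\,(p^j/\omega_{\mathbf{p}})a^{*}a - b_j\!\int d^n\mathbf{p}\, a^{*}a$. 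Using the metric convention $p^j=-p_j$ to recognise $\int (p^j/\omega_{\mathbf{p}})a^{*}a = -V_j$ and $\int a^{*}a = N$ (the latter via (\ref{pcaopm}) in the noncovariant normalisation of $a,a^{*}$) yields (\ref{xip}), and since no higher powers of $b$ can appear this identity is exact.

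For the temporal component I would use the explicit double-integral form (\ref{uexpm3}) of $X_0$. This expression contains no momentum derivative, so the adjoint action simply dresses $a^{*}(\mathbf{p})$ with $e^{ib_\mu p^\mu}$ and $a(\mathbf{q})$ with $e^{-ib_\mu q^\mu}$; the two phases combine to $e^{ib_\mu (p-q)^\mu}$, producing the first equality in (\ref{x0p}) directly. Taylor expanding this phase to first order in $b$ and matching the coefficient against the definition $\alpha_b(X_0) = X_0 - b_\mu \widetilde{V}^\mu + \mathcal{O}(b^2)$ reads off
\[
\widetilde{V}^\mu = -ic_n\iint d^n\mathbf{p}\, d^n\mathbf{q}\, (p-q)^\mu\, \omega_{\mathbf{p}-\mathbf{q}}^{-d}\, a^{*}(\mathbf{p})a(\mathbf{q}).
\]

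The step I expect to be the main obstacle is identifying the spatial part of $\widetilde{V}^\mu$ with $-d\Gamma(\mathbf{Q}\cdot|\mathbf{Q}|^{-1})$. For this I would reuse the Fourier identity established in the proof of the preceding lemma, $\int d^n\mathbf{x}\, e^{i(p-q)_k x^k}|\mathbf{x}| = -(2\pi)^n c_n |\mathbf{p}-\mathbf{q}|^{-d}$. Because $\partial|\mathbf{x}|/\partial x^j = x^j/|\mathbf{x}|$, a single integration by parts, equivalently a differentiation of this identity with respect to the momentum argument, yields $\int d^n\mathbf{x}\, e^{i(p-q)_k x^k}\,x^j/|\mathbf{x}|$ as a constant multiple of $(p-q)_j/|\mathbf{p}-\mathbf{q}|^d$. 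Substituting this back into the expression for $\widetilde{V}^j$ and recognising the resulting double integral as the momentum-space kernel of the one-particle multiplication operator $Q^j/|\mathbf{Q}|$, second-quantised via the standard formula $d\Gamma(A) = \int d^n\mathbf{p}\, a^{*}(\mathbf{p})(Aa)(\mathbf{p})$, closes the identification. The only genuine difficulty is the bookkeeping of upper-versus-lower spatial indices under the metric convention $a\cdot b = a_0 b^0 - \vec{a}\cdot\vec{b}$, so that the overall sign in $\widetilde{\mathbf{V}} = -d\Gamma(\mathbf{Q}\cdot|\mathbf{Q}|^{-1})$ emerges correctly.
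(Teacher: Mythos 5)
Your treatment of (\ref{x0p}) and (\ref{xip}) coincides with the paper's: both transport $\alpha_b$ onto $a^{*},a$ via (\ref{caut}), obtain (\ref{xip}) exactly from the product rule (the expansion indeed terminates), and read off $\widetilde{V}^{\mu}=-ic_n\iint d^n\mathbf{p}\,d^n\mathbf{q}\,(p-q)^{\mu}\omega_{\mathbf{p}-\mathbf{q}}^{-d}\,a^{*}(\mathbf{p})a(\mathbf{q})$ from the first-order term of the phase. Where you genuinely diverge is the last claim, $\widetilde{\mathbf{V}}=-d\Gamma(\mathbf{Q}\cdot|\mathbf{Q}|^{-1})$: you propose a direct kernel computation, deriving the distributional Fourier transform of $x^{j}/|\mathbf{x}|$ from the Gelfand--Shilov identity for $|\mathbf{x}|$ by integration by parts and matching it against the kernel of $\widetilde{V}^{j}$. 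The paper instead argues structurally: by Baker--Campbell--Hausdorff, $\alpha_{\mathbf{b}}(X_0)=X_0+ib_j[P^{j},X_0]+\mathcal{O}(b^2)$; the unitary equivalence (\ref{uexp2}) and (\ref{x0}) turn $[P^{j},X_0]$ into $\Gamma(U_{\mathscr{F}})[X^{j},P_0]\Gamma(U_{\mathscr{F}}^{-1})$, the Heisenberg relation $iV^{j}=[X^{j},P_0]$ identifies $\widetilde{V}^{j}$ as the Fourier-transformed velocity, and then $\Gamma(U_{\mathscr{F}})d\Gamma(\mathbf{P}\cdot|\mathbf{P}|^{-1})\Gamma(U_{\mathscr{F}}^{-1})=d\Gamma(U_{\mathscr{F}}(\mathbf{P}\cdot|\mathbf{P}|^{-1})U_{\mathscr{F}}^{-1})=-d\Gamma(\mathbf{Q}\cdot|\mathbf{Q}|^{-1})$, the sign coming from the parity flip between conjugation by $U_{\mathscr{F}}$ and by $U_{\mathscr{F}}^{-1}$ on an odd function. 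Your route is more elementary and self-contained (it recycles the Fourier identity already used for $X_0$, and $x^{j}/|\mathbf{x}|$ is odd and homogeneous of degree zero, so its transform is an unambiguous odd homogeneous distribution of degree $-n$), at the cost of heavier index and sign bookkeeping; the paper's route buys, for free, the additional assertion of the lemma that $\widetilde{\mathbf{V}}$ \emph{is} the Fourier transform of the velocity operator, which in your version is only obtained implicitly after the identification. One small correction: integrating by parts against $\partial|\mathbf{x}|/\partial x^{j}$ is \emph{not} equivalent to differentiating the identity with respect to the momentum argument --- the latter produces the transform of $x^{j}|\mathbf{x}|$, not of $x^{j}/|\mathbf{x}|$; only the integration-by-parts step (which is the one you actually use, and which pulls down the factor $(p-q)_j$) is the correct move.
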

\begin{proof}
By using the transformation property (see Equation (\ref{caut})) of the creation and
annihilation operator under the unitary
operator $U(b)$, we obtain for the zero component  
\begin{align*}  
\alpha_{ b }\left(X_{0}\right)&=  c_n
\iint
d^n \mathbf{p} \,d^n\mathbf{q}\,{\omega_{\mathbf{p}-\mathbf{q}}^{ - d}}e^{ib_{\mu}p^{\mu}} {a}^{*}(\mathbf{p})  e^{-ib_{\mu}q^{\mu}} {a}(\mathbf{q})\\&=
 c_n
\iint
d^n \mathbf{p} \,d^n\mathbf{q}\,\biggl(1+ib_\mu(p-q)^\mu\biggr) {\omega_{\mathbf{p}-\mathbf{q}}^{ - d}}{a}^{*}(\mathbf{p}) {a}(\mathbf{q})+\mathcal{O}(b^2) \\&=
X_0+ ib_\mu\,c_n
\iint
d^n \mathbf{p} \,d^n\mathbf{q}\, (p-q)^\mu {\omega_{\mathbf{p}-\mathbf{q}}^{ - d}} {a}^{*}(\mathbf{p}) {a}(\mathbf{q})+\mathcal{O}(b^2)
\\&=:  X_0- b_\mu \widetilde{V}^{\mu}+\mathcal{O}(b^2).
\end{align*}
For the spatial component of the coordinate operators the adjoint action of the translations acts as follows,
\begin{align*}
e^{ib_{\mu}P^{\mu}}X_{j}
e^{-ib_{\mu}P^{\mu}}& =
-i \int d^n\mathbf{p}e^{ib_{\mu}p^{\mu}} {a}^*(\mathbf{p})
\partial_{j}\left(e^{-ib_{\mu}p^{\mu}} {a}(\mathbf{p})\right)
\\&= X_{j}
- b_{0}\int d^n\mathbf{p}(-\frac{p_{j}}{\omega_{\textbf{p}}}){a}^*(\mathbf{p})
{a}(\mathbf{p})
 - b_{k}\int d^n\mathbf{p}\,\eta_{j}^{\,\,k}
 {a}^*(\mathbf{p})
 {a}(\mathbf{p}) \\&=X_{j}
+b_{0}V_j
 - b_{j}N.
\end{align*} 
By using the Baker-Campbell-Hausdorff formula it can be seen that the adjoint action of $X_0$ gives in first oder the Fourier transform of the velocity operator, i.e. 
\begin{align*}  
\alpha_{ \mathbf{b} }\left(X_{0}\right)&= X_0+ib_j\left[P^{j}, X_0\right]+\mathcal{O}(b^2)\\&=
X_0+ib_j\left(\left[\Gamma(U_{\mathscr{F}}) X^{j}\Gamma(U_{\mathscr{F}}^{-1}) , X_0\right]\right)+\mathcal{O}(b^2)\\&=
X_0+ib_j\left(\Gamma(U_{\mathscr{F}}) \left[X^{j} ,\Gamma(U_{\mathscr{F}}^{-1}) X_0\Gamma(U_{\mathscr{F}})\right]\Gamma(U_{\mathscr{F}}^{-1})\right)+\mathcal{O}(b^2)\\&=
X_0+ib_j \left(\Gamma(U_{\mathscr{F}}) \left[X^{j} ,  P_0 \right]\Gamma(U_{\mathscr{F}}^{-1})\right)+\mathcal{O}(b^2)\\&= 
X_0-b_j \left(\Gamma(U_{\mathscr{F}})V^j\Gamma(U_{\mathscr{F}}^{-1})\right)+\mathcal{O}(b^2)\\&= 
X_0-b_j\widetilde{V}^{j}+\mathcal{O}(b^2),
\end{align*}
where in the last lines we used the unitary equivalence and the velocity operator given by the Heisenberg equation $iV^{j}=[X^j,P_0]$. Moreover,   the velocity operator in Equation (\ref{vel})  is the second quantization of $\mathbf{P}\cdot |\mathbf{P}|^{-1}$.  Hence, its Fourier transform is given as,
\begin{align*}  
\Gamma(U_{\mathscr{F}})d\Gamma(\mathbf{P} \cdot |\mathbf{P}|^{-1})\Gamma(U_{\mathscr{F}}^{-1})&= d\Gamma(U_{\mathscr{F}}\left(\mathbf{P}\cdot  |\mathbf{P}|^{-1}\right)U_{\mathscr{F}}^{-1})\\&=  -
d\Gamma(U^{-1}_{\mathscr{F}}\left(\mathbf{P}\cdot  |\mathbf{P}|^{-1}\right)U_{\mathscr{F}})\\&=-
d\Gamma(\mathbf{Q} \cdot |\mathbf{Q}|^{-1}),
\end{align*}
where in the last lines we used the properties of the Fourier transform, the properties of the second quantization and the defining Equation (\ref{uexp2}). Of course, the equivalence can as well be directly shown  on the level of the  creation and annihilation operators, as was done for $X_{\mu}$.
\end{proof}

\subsection{QFT-Moyal-Weyl from deformation}Since we work with unbounded operators,  we are  obliged to prove that the deformed product of
the conjugate operators is well-defined. In particular, the task is to establish smoothness of  the oscillatory  integral  
in a suitable locally convex topology and to demonstrate that all derivatives are polynomially bounded (see  \cite[Theorem 1]{AA} for a similar proof). 
After this is established,  we build on   results of \cite{LW} where smoothness and polynomial boundedness are the requirements needed in order to prove the well-definedness  of the warped convolutions integral.  \\\\ Now for the spatial part a simpler route will be chosen. It is easier to calculate the deformation of the spatial conjugate operator and take the commutator than to calculate the deformed product. The  two results are equivalent  since the outcome of the deformed product commutes with  the momentum operator.  This is a consequence of the fact that the product of the deformed operators   is equal to the deformation of the deformed product of the operators (see Lemma \ref{dpl}). Therefore, we consider for the spatial part   the oscillatory  integral   as follows
\begin{align*}
\langle \Psi , \mathbf{X}_{\theta}\Phi \rangle&=
(2\pi)^{-d}
\lim_{\epsilon\rightarrow 0}
  \iint  \, d x \,  d y\, e^{-ixy}  \, \chi(\epsilon x,\epsilon
y) {\langle \Psi, 
U(y)\alpha_{\theta x}(\mathbf{X})\Phi \rangle} \\&=
(2\pi)^{-d}
\lim_{\epsilon\rightarrow 0}
  \iint  \, d x \,  d y\, e^{-ixy}  \, \chi_{\theta}( \epsilon x,\epsilon
  y) {\langle \Psi, 
U(\theta y)\alpha_{  x}(\mathbf{X})\Phi \rangle} 
   ,
\end{align*}
 $\forall \Psi, \Phi \in  D(\mathbf{Q})_{\otimes}^{+}$ and  $ \chi_{\theta}( \epsilon x,\epsilon
  y):=  \chi (\epsilon\theta^{-1} x,\epsilon
\theta y)$ .
  
\begin{lemma}\label{lfpx}The deformation of the spatial conjugate operators, i.e. $\mathbf{X}_{\theta}$,  is given as a well-defined
oscillatory integral on the dense domain  $D(\mathbf{Q})_{\otimes}^{+}$.
\end{lemma}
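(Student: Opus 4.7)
The plan is to verify the hypotheses of the oscillatory-integral framework developed in \cite{LW} (see also \cite[Theorem~1]{AA}): namely, that the regularized integrand
\[
F_{\epsilon}(x,y) := \chi_{\theta}(\epsilon x,\epsilon y)\,e^{-ixy}\,\langle\Psi,U(\theta y)\alpha_{x}(\mathbf{X})\Phi\rangle, \qquad \Psi,\Phi\in D(\mathbf{Q})_{\otimes}^{+},
\]
is smooth on $\mathbb{R}^{d}\times\mathbb{R}^{d}$ with all partial derivatives polynomially bounded uniformly in $\epsilon>0$. Once this is established, the cited results guarantee that $\iint F_{\epsilon}\,dx\,dy$ converges as $\epsilon\to 0$ to a limit independent of the cutoff $\chi$, defining the matrix element $\langle\Psi,\mathbf{X}_{\theta}\Phi\rangle$.

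The $x$-dependence is reduced to the elementary polynomial case by Lemma~\ref{utco}:
\[
\alpha_{x}(X_{j}) = X_{j} + x_{0}V_{j} - x_{j}N,
\]
so the matrix element is affine linear in $x$, and all $x$-derivatives of order $\geq 2$ vanish identically; the derivatives of $\chi_{\theta}(\epsilon\,\cdot)$ are bounded uniformly in $\epsilon$ by the Schwartz decay of $\chi$. For the $y$-dependence, both $\Psi$ and $\Phi$ lie in $D(\mathbf{Q})_{\otimes}^{+}\subset\mathcal{D}$, so they are smooth vectors for the translation representation $U$ (each of their finitely many nonzero components is Schwartz in momentum space). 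Hence $y\mapsto U(\theta y)T\Phi$ is $C^{\infty}$ in norm for $T\in\{I,V_{j},N\}$, with $\partial^{\beta}_{y}U(\theta y)T\Phi=U(\theta y)(i\theta P)^{\beta}T\Phi$, and uniform-in-$y$ bounds on each matrix element $\langle\Psi,U(\theta y)(i\theta P)^{\beta}T\Phi\rangle$ follow from Cauchy--Schwarz. Combined with the degree-one growth in $x$, this delivers the required estimates.

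The main obstacle will be the unboundedness of $V_{j}$ and $N$ in combination with repeated differentiation in $y$: $V_{j}$, being the second quantization of the bounded-but-nonsmooth multiplier $p_{j}/\omega_{\mathbf{p}}$, does not preserve Schwartz regularity at the origin of momentum space in the naive sense. I would circumvent this by transferring the momentum factors arising from $\partial^{\beta}_{y}$ onto the bra $\Psi$ via self-adjointness of $P_{\mu}$, so that only $L^{2}$-boundedness of $V_{j}$ and $N$ on the finite-particle sectors of $\Phi$ is needed; $N$ acts as a scalar on each sector, while $V_{j}$ is bounded there by one. With this rearrangement, the polynomial bounds required by \cite{LW} are secured, and the oscillatory integral defining $\mathbf{X}_{\theta}$ is well-defined on $D(\mathbf{Q})_{\otimes}^{+}$.
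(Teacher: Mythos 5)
Your proposal is correct and follows essentially the same route as the paper's proof: both reduce the $x$-dependence to the affine formula $\alpha_{x}(X_{j})=X_{j}+x_{0}V_{j}-x_{j}N$ of Lemma \ref{utco}, generate momentum factors from the $y$-derivatives of $U(\theta y)$ and shift them onto the bra vector, and then conclude polynomial boundedness of all derivatives via Cauchy--Schwarz and the finiteness of the relevant norms on $D(\mathbf{Q})_{\otimes}^{+}$, before invoking the warped-convolution framework of \cite{LW} (cf.\ \cite{AA}). The only cosmetic difference is that you differentiate the affine expression in $x$ directly (so higher $x$-derivatives vanish identically), whereas the paper phrases the $x$-derivatives as commutators producing momentum factors, which amounts to the same estimate.
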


\begin{proof}  
Following the arguments discussed above we  demonstrate the smoothness of the scalar product ${\langle \Psi, 
U(\theta y)\alpha_{  x}(\mathbf{X})\Phi \rangle}$  in $x$ and $y$ and the polynomially boundedness of all derivatives.  Hence, we start with the following expression
 \begin{align*}
 |\langle
\Psi, \partial_y^{r}\partial_x^{s}
U(\theta y)\alpha_{  x}(\mathbf{X})\Phi\rangle|=
 |\langle
\Psi, (\theta P)^r
U(\theta y)\partial_x^{s}\alpha_{  x}(\mathbf{X})\Phi\rangle|. \end{align*}
where, $r,s \in\mathbb{N}^n_0$ are multi-indices. After applying the derivative w.r.t. the variable $x$ we obtain finite linear combinations of the following terms
 \begin{align*}
 |\langle
\Psi, (\theta P)^r
U(\theta y)P^{s_1}  \alpha_{  x}(\mathbf{X})\underbrace{P^{s_2}\Phi}_{=:\Phi_s}\rangle|&
=
 |\langle  U(-\theta y)(\theta P)^rP^{s_1} 
\Psi, 
\alpha_{  x}(\mathbf{X})\Phi_s\rangle| \\&
\leq   \underbrace{\|   (\theta P)^r  P^{s_1} 
\Psi\|}_{=:c_{1,\Psi}} \| \alpha_{  x}(\mathbf{X})\Phi_s\|
 \\&=  c_{1,\Psi  }\|  \left( \mathbf{X}
+x_{0}\mathbf{V}
 - \mathbf{x}N\right)\Phi_s\| \\&\leq c_{1,\Psi }\left( \underbrace{\| \mathbf{X}\Phi_s\| }_{=:c_2}
+|x_{0}|\underbrace{\| \mathbf{V}\Phi_s\| }_{=:c_3} 
 +\| \mathbf{x}\|\underbrace{\| N\Phi_s\| }_{=:c_4} \right)\\&\leq c_{ \Psi }\left(1+|x_{0}|+\|  \mathbf{x}\|\right)
, \end{align*}
where in the last lines ${s_1} ,{s_2} \in\mathbb{N}^n_0$ are multi-indices and we used Cauchy-Schwarz.  The constant $c_{1,\Psi}$ is finite
due to the properties of the vectors $\Psi$ and $\Phi$. Moreover due to the fact that  $\Phi_s \in D(\mathbf{Q})_{\otimes}^{+}$  the constants $c_2, c_3, c_4$ are finite and a 
 constant $c_{ \Psi }$ exists such that the last inequality holds.     Now 
since  $\Phi,\,\Psi$  are  arbitrary elements
of a dense set of vectors we conclude that the deformation of the unbounded spatial conjugate operator is well-defined on $D(\mathbf{Q})_{\otimes}^{+}$.  
\end{proof}$\,$\\
For the deformed product of  the spatial and temporal conjugate operator we   consider the oscillatory  integral   as follows
\begin{align*}
\langle \Psi , (X_0\times_{\theta}\mathbf{X}) \Phi \rangle&=
(2\pi)^{-d}
\lim_{\epsilon\rightarrow 0}
  \iint  \, d x \,  d y\, e^{-ixy}  \, \chi(\epsilon x,\epsilon
y) {\langle \Psi,\alpha_{\theta x}(X_0)\alpha_{y}(\mathbf{X}) \Phi \rangle}
   ,
\end{align*}
 $\forall \Psi, \Phi \in  D(\mathbf{Q})_{\otimes}^{+}$.

 \begin{lemma}\label{lfpx1}The deformed product of  the temporal and spatial  conjugate operator, i.e. $X_0\times_{\theta}\mathbf{X}$,  is given as a well-defined
oscillatory integral on the dense domain  $D(\mathbf{Q})_{\otimes}^{+}$.
\end{lemma}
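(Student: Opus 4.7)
The plan is to adapt the strategy from Lemma \ref{lfpx}: the goal is to verify that the vector-valued map $(x,y) \mapsto \alpha_{\theta x}(X_0)\alpha_y(\mathbf{X})\Phi$ is smooth on the dense domain $D(\mathbf{Q})_\otimes^+$ with all partial derivatives polynomially bounded in $(x,y)$, so that the criteria of \cite{LW} apply and the warped-convolution oscillatory integral is well defined. Smoothness itself is essentially automatic: every $\Phi \in D(\mathbf{Q})_\otimes^+$ is a smooth vector for the strongly continuous translation group $U(b) = e^{ib_\mu P^\mu}$, and both $X_0$ and $\mathbf{X}$ preserve this Schwartz-like domain by the results of Section \ref{s3}.

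The bulk of the proof consists in obtaining polynomial bounds on the derivatives. I would apply $\partial_x^s\partial_y^r$ to $F(x,y):=\langle \Psi,\alpha_{\theta x}(X_0)\alpha_y(\mathbf{X})\Phi\rangle$, use $\partial_x U(\pm\theta x) = \pm i(\theta P)\,U(\pm\theta x)$ and $\partial_y U(\pm y) = \pm iP\,U(\pm y)$, and invoke $[P^\mu, U(b)]=0$ to freely commute the extracted momenta past the remaining translations. This yields a finite linear combination of terms of the form
\[
\langle \Psi,\, P^{\alpha_1}\alpha_{\theta x}(X_0)\,P^{\alpha_2}\alpha_y(\mathbf{X})\,P^{\alpha_3}\Phi\rangle
\]
with $|\alpha_1|+|\alpha_2|+|\alpha_3|\le|r|+|s|$. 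Moving $P^{\alpha_1}$ to the left and applying Cauchy-Schwarz then gives
\[
|\partial_x^s\partial_y^r F(x,y)| \le c_\Psi\,\|\alpha_{\theta x}(X_0)\,\Xi(y)\|,
\]
where $c_\Psi := \|P^{\alpha_1}\Psi\|<\infty$ and $\Xi(y):=P^{\alpha_2}\alpha_y(\mathbf{X})\,P^{\alpha_3}\Phi$.

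Lemma \ref{utco} then gives $\alpha_y(\mathbf{X}) = \mathbf{X}+y_0\mathbf{V}-\mathbf{y}N$, so by the triangle inequality and the Schwartz character of $\Phi$ one obtains $\|\Xi(y)\|\le c(1+|y_0|+\|\mathbf{y}\|)$ uniformly in $y$. The key remaining step is to bound $\|\alpha_{\theta x}(X_0)\Xi(y)\| = \|X_0\,U(-\theta x)\Xi(y)\|$ polynomially in $x$. To this end, I would pass to the Fourier-dual representation via $\Gamma(U_\mathscr{F})$, in which $X_0$ acts on the $k$-particle sector as multiplication by $\sum_{i=1}^k|\mathbf{x}_i|$, while $U(-\theta x)$ factors into translation by $-\theta\mathbf{x}$ in each $\mathbf{x}_i$ and multiplication by the modulus-one phase $e^{-i(\theta x)_0\sum_i|\mathbf{x}_i|}$. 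A change of variables together with the reverse triangle inequality $\bigl||\mathbf{x}_i-\theta\mathbf{x}|-|\mathbf{x}_i|\bigr|\le|\theta\mathbf{x}|$ then gives an estimate of the form $\|X_0 U(-\theta x)\Xi(y)\| \le C\,(1+|\theta\mathbf{x}|)$, so the overall bound on $|\partial_x^s\partial_y^r F(x,y)|$ is polynomial in $(x,y)$.

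The main obstacle is precisely this last step: the translation $\alpha_b(X_0)$, in contrast to $\alpha_b(\mathbf{X})$, is not a first-order polynomial in $b$ but the oscillating integral \eqref{x0p} with singular kernel $\omega_{\mathbf{p}-\mathbf{q}}^{-d}$. Passing to the Fourier-dual picture is what converts this singular object into a multiplication operator whose interaction with spatial translations is transparent; the temporal phase is harmless in the $L^2$ norm. Once this polynomial growth is secured, the remainder is a verbatim application of the argument of Lemma \ref{lfpx}, invoking \cite{LW} to conclude the well-definedness of the oscillatory integral defining $X_0\times_\theta\mathbf{X}$ on $D(\mathbf{Q})_\otimes^+$.
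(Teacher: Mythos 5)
Your overall plan (smoothness plus polynomial bounds on all derivatives, then the criteria of \cite{LW}) is a legitimate and genuinely different route from the paper's, but the step you yourself single out as the crux contains an error. In the Fourier-dual picture obtained by conjugating with $\Gamma(U_{\mathscr{F}})$ it is $X_0=\Gamma(U_{\mathscr{F}}^{-1})P_0\Gamma(U_{\mathscr{F}})$ that becomes multiplication by $\sum_{i}|\mathbf{x}_i|$; the generator $P_0$ of the time translations does \emph{not}. In that representation $P_0$ is the nonlocal operator $d\Gamma\bigl(\sqrt{-\Delta_{\mathbf{x}}}\bigr)$, i.e. the convolution operator with the kernel appearing in Equation (\ref{uexpm3}), so $e^{-i(\theta x)_0P^0}$ is \emph{not} the multiplication phase $e^{-i(\theta x)_0\sum_i|\mathbf{x}_i|}$ and does not commute with the multiplication operator representing $X_0$. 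If your factorization were correct, one would have $[X_0,P_0]=0$, in contradiction with Lemma \ref{utco} and Equation (\ref{x0p}), where the translation of $X_0$ produces the nontrivial first-order term $-b_\mu\widetilde{V}^{\mu}$ with $\widetilde{V}^{0}\neq 0$. Consequently your bound $\|X_0\,U(-\theta x)\,\Xi(y)\|\leq C(1+|\theta\mathbf{x}|)$ controls only the dependence on the spatial part of $\theta x$; the growth in the $(\theta x)_0$-direction, which is exactly the nontrivial interplay between $X_0$ and the translations, is left unestimated, so the polynomial bound needed to invoke \cite{LW} is not established.

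The gap is repairable, but it requires an actual estimate rather than the claimed triviality: for instance, write $\alpha_{\theta x}(X_0)=X_0-\int_0^1 ds\,\alpha_{s\theta x}\bigl((\theta x)_\mu\widetilde{V}^{\mu}\bigr)$ and use that $\widetilde{V}^{\mu}$ is the second quantization of a \emph{bounded} one-particle operator ($\mathbf{Q}\cdot|\mathbf{Q}|^{-1}$ for the spatial components, and the commutator $[\,|\mathbf{P}|,|\mathbf{Q}|\,]$, bounded because $\mathbf{x}\mapsto|\mathbf{x}|$ is Lipschitz, for $\mu=0$); this makes $\widetilde{V}^{\mu}$ relatively $N$-bounded and, since $N$ commutes with the translations, yields at most linear growth in $|\theta x|$ on vectors of finite particle number, after which your argument goes through. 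Note also that the paper's own proof avoids this issue entirely and is structurally different: since $\alpha_y(\mathbf{X})$ is exactly affine in $y$ (Equation (\ref{xip})), only the first-order term of the expansion of $\alpha_{\theta x}(X_0)$ survives the oscillatory limit (terms of unequal order in $\theta x$ and $y$ vanish after partial integration, cf.\ \cite{MUc}), so the integrand collapses to an $x,y$-independent expression bounded via Cauchy-Schwarz and the integral reduces to a delta evaluation; no uniform-in-$x$ control of $\alpha_{\theta x}(X_0)$ is ever needed there.
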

 \begin{proof}
For this proof we demonstrate that the expression in the oscillatory integral, after getting rid of the boundary terms,  is   polynomially bounded. 
Hence, we start by considering the following expression

\begin{align*} (2\pi)^{-d}
\lim_{\epsilon\rightarrow 0}
  \iint  \, d x \,  d y\, e^{-ixy}  \, \chi(\epsilon x,\epsilon
y) 
\langle \Psi  ,\alpha_{\theta x}(X_0) \left(\mathbf{X}
+y_{0}\mathbf{V}
 -\mathbf{y}N\right)  { \Phi } \rangle     ,
\end{align*}
where in the last line we used the adjoint action of the translations on $\mathbf{X}$ (see Equation (\ref{xip})).
Since the adjoint action of the momentum operator on $X_0$   is from a calculational point of view hard to handle to all orders in $(\theta x)$, we mention that  we only need to look at the expansion in first order. This is due to the fact that all other orders vanish since   the terms in $y$ are only of first order. For a proof of this statement review \cite[Proof of Lemma 5.3]{MUc}. However, this fact can   be easily seen by rewriting the terms in $(\theta x)$ as partial derivatives of the exponential term. After partial integration, terms of unequal order in $x$ and  $y$ vanish since derivatives of  $\chi(\epsilon x,\epsilon
y)$ contain powers of $\epsilon$ which vanish in the limit $\epsilon\rightarrow 0$.
Now by using adjoint action of $X_0$ (see Equation (\ref{x0p})) we obtain   for the deformed product the following,
\begin{align*}&(2\pi)^{-d}
\lim_{\epsilon\rightarrow 0}
  \iint  \, d x \,  d y\, e^{-ixy}  \, \chi(\epsilon x,\epsilon
y)\langle  \Psi , (X_0- (\theta x)_\mu \widetilde{V}^{\mu}) \left(\mathbf{X}
+y_{0}\mathbf{V}
 -\mathbf{y}N\right)\Phi \rangle \\&=(2\pi)^{-d}
\lim_{\epsilon\rightarrow 0}
  \iint  \, d x \,  d y\, e^{-ixy}  \, \chi(\epsilon x,\epsilon
y)\langle  \Psi ,  \biggl(
X_0 \mathbf{X}-y_0(\theta x)_\mu \widetilde{V}^{\mu} \mathbf{V}  
+ \mathbf{y}(\theta x)_\mu \widetilde{V}^{\mu}  N
 \biggr)\Phi \rangle\\&= (2\pi)^{-d}
\lim_{\epsilon\rightarrow 0}
  \iint  \, d x \,  d y\, e^{-ixy}  \, \chi(\epsilon x,\epsilon
y)\underbrace{\langle  \Psi ,  \biggl(
X_0 \mathbf{X}+ i \theta  _{\mu 0} \widetilde{V}^{\mu}  \mathbf{V}   
-i \theta_{\mu j}   \widetilde{V}^{\mu} N
 \biggr)\Phi \rangle}_{\leq \|\Psi\|\left(\|X_0 \mathbf{X}\Phi\|+\| \theta  _{\mu 0} \widetilde{V}^{\mu}  \mathbf{V}   \Phi\|+\| \theta_{\mu j}   \widetilde{V}^{\mu} N\Phi\|\right)=: C_{\Psi,\Phi} }  \\&\leq C_{\Psi,\Phi}(2\pi)^{-d} 
\lim_{\varepsilon_1\rightarrow 0}  \left(
\int dx \lim_{\varepsilon_2\rightarrow 0} 
\left(\int dy  e^{-ixy}
\chi_2(\varepsilon_2 y)\right)\,\chi_1(\varepsilon_1  x)\,
 \right)
\\
&=   C_{\Psi,\Phi}    (2\pi)^{-d/2}
\lim_{\varepsilon_1\rightarrow 0}  \left(
\int dx \,
\delta( {x} )\,\chi_1(\varepsilon_1  x)    \right) 
 =  C_{\Psi,\Phi}     .
\end{align*}
Here we used the fact that the
oscillatory integral does not depend on the cut-off function  chosen. As in
\cite{RI}, we chose $\chi (\varepsilon k,\varepsilon y)= \chi_2(\varepsilon_2 k
)\chi_1(\varepsilon_1 y)$ 
with $\chi  \in\mathscr{S}(\mathbb{R}^{d}\times\mathbb{R}^{d})$ and $\chi_{l}(0 )=1$, $l=1$, $2$,
and obtained the delta
distribution $\delta(y)$ in the limit $\varepsilon_2 \rightarrow
0$, \cite[Section 7.8, Equation 7.8.5]{H}.   
Hence, the remaining term inside the oscillatory integral is bounded by a constant and the integral is well-defined. 
  Now the scalar product under the integral is  estimated by using Cauchy-Schwarz. The finiteness of $C_{\Psi,\Phi} $ follows from the property of the vectors 
$ \Psi$ and $\Phi$. Equivalent considerations can be done for the deformed product $\mathbf{X}\times_{\theta}X_0$ and hence the deformed commutator of the conjugate operators is given as a well-defined oscillatory integral. 
 \end{proof}
\begin{remark}
In addition to the polynomial boundedness, smoothness w.r.t. the variables $x$ and $y$ is a necessary requirement. However,  derivatives   produce   finite linear combinations of the following terms  $
\langle (\theta P)^{r_1}\Psi  ,\alpha_{\theta x}(X_0)(\theta P)^{r_2} P^{s_1}\left(\mathbf{X}
+y_{0}\mathbf{V}
 -\mathbf{y}N\right)  {P^{s_1} \Phi } \rangle$. Hence, we follow the same arguments as before and end up with a similar expression with the exception of additional polynomials of the momentum operator. Now by estimating the term using Cauchy-Schwarz and by using the fact that the vectors $ \Psi, \Phi \in  D(\mathbf{Q})_{\otimes}^{+}$ finiteness of the resulting constant is guaranteed.
\end{remark}
$\,$\\
 Next we turn to the actual result of the  deformed product of the  coordinate operators.
\newline
\begin{lemma}\label{qftdefx}
The deformed  coordinate operator   $X^{j}_{\theta}$, obtained by warped convolutions,  represented on the dense domain $D(\mathbf{Q})_{\otimes}^{+}$ is given by
\begin{equation*} 
X^{j}_{\theta}=X^{j} +\left(\theta P \right)^{0}V^j - \left(\theta P \right)^{j}N.
\end{equation*}
\end{lemma}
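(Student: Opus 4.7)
The plan is to substitute the \emph{exact} translation formula for $X^{j}$ from Lemma \ref{utco} into the defining oscillatory integral for the warped convolution and then evaluate the resulting spectral integrals termwise. The key observation is that Equation (\ref{xip}) reads
\[
\alpha_{b}(X_{j}) = X_{j} + b_{0}V_{j} - b_{j}N,
\]
with \emph{no} higher-order corrections in $b$, because $V_{j}$ and $N$ both commute with every component of $P^{\mu}$ and so the iterated commutators of $X_{j}$ with $P^{\mu}$ terminate after a single step. Setting $b = \theta x$ (and raising indices with the Minkowski metric in accordance with the convention used in the statement of the lemma) therefore yields the closed-form expression $\alpha_{\theta x}(X^{j}) = X^{j} + (\theta x)^{0}V^{j} - (\theta x)^{j}N$.

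Substituting this into the spectral form $X^{j}_{\theta} = \int \alpha_{\theta x}(X^{j})\,dE(x)$, interpreted as an oscillatory integral on $D(\mathbf{Q})_{\otimes}^{+}$ whose well-definedness has already been established in Lemma \ref{lfpx}, splits the warped convolution into three contributions,
\[
X^{j}_{\theta} \;=\; X^{j}\!\int dE(x) \;+\; V^{j}\!\int (\theta x)^{0}\,dE(x) \;-\; N\!\int (\theta x)^{j}\,dE(x).
\]
The first integral returns the identity. The remaining two are recognised through the spectral theorem applied to the unitary representation $U(y) = e^{i y \cdot P}$: since $\int x^{\mu}\,dE(x) = P^{\mu}$, the linearity of $\theta$ promotes $\theta x$ to $\theta P$ acting between the external operators, so the two polynomial terms become $(\theta P)^{0}V^{j}$ and $-(\theta P)^{j}N$, respectively. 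Assembling the three pieces reproduces the claimed identity.

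The only delicate point is making rigorous sense of the spectral integrals of the unbounded polynomial factors $(\theta x)^{\mu}$ on the dense domain. I expect this to be handled exactly as in Section \ref{s2}: one rewrites $(\theta x)^{\mu}e^{-ixy} = i\,\theta^{\mu\nu}\partial_{y^{\nu}}e^{-ixy}$, integrates by parts in $y$ so that the derivative falls on $U(y)$ and produces the generator $P_{\nu}$ in front of $U(y)$, and then the cut-off limit reduces the remaining $(x,y)$ integral to $\delta(y)$ with $U(0) = \mathbf{1}$. The Cauchy-Schwarz-type estimates already deployed in the proof of Lemma \ref{lfpx}, together with the fact that $V^{j}$ and $N$ preserve $D(\mathbf{Q})_{\otimes}^{+}$, control every intermediate step and justify all the exchanges of limits and integrations.
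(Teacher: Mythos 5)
Your proposal is correct and follows essentially the same route as the paper: substitute the exact translation law $\alpha_{b}(X_{j})=X_{j}+b_{0}V_{j}-b_{j}N$ of Lemma \ref{utco} into the spectral representation $X^{j}_{\theta}=\int\alpha_{\theta x}(X^{j})\,dE(x)$ and evaluate termwise via $\int x^{\mu}\,dE(x)=P^{\mu}$, the ordering being immaterial since $V^{j}$ and $N$ commute with the momenta. Your additional remarks on the oscillatory-integral justification mirror the estimates the paper already provides in Lemma \ref{lfpx}, so nothing essential differs.
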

\begin{proof}
We use the defining equation (\ref{WC}) and Lemma \ref{utco}.   The  calculation can be performed  for the spatial coordinate operator $X_{j}$ on
$\Psi \in
D(\mathbf{Q})_{\otimes}^{+}$ as follows,
\begin{align*}
  X^{j}_{\theta}\Psi &= \int\alpha_{\theta k}( X^{j}) dE(k)\Psi  
 \\&= \int \left(X^{j}
+(\theta k)_{0}V^j
 -(\theta k)^{j}N
\right)dE(k)\Psi  
 \\&=  \left(X^{j} +\left(\theta P \right)^{0}V^j - \left(\theta P \right)^{j} N\right)
\Psi.
\end{align*}
\end{proof} 
In the next step we give the commutator of the deformed coordinate operator. In order to make  relativistic corrections more apparent   we do not set the speed of light $c$ equal to one.  \newline
\begin{theorem}\label{qmwxc} \textbf{QFT-Moyal-Weyl.} The deformed commutator of the coordinate operators represented on the dense domain $D(\mathbf{Q})_{\otimes}^{+}$ is given by
\begin{align*} 
[X_\mu\stackrel{\times_{\theta}}{,} X_\nu] &=   i\widehat{\theta}_{\mu\nu},
\end{align*}
where the operator-valued matrix  $ i\widehat{\theta}$ is in the $0j$-component given as 
\begin{align*} 
  i\widehat{\theta}_{0j}=  i    \theta_{0k}  \, \widetilde{V} ^{k}_{\,\,j}
-2i\theta_{jk}  \,
d\Gamma( {Q}^k \cdot |\mathbf{Q}|^{-1})N ,
\end{align*}
and in the spatial section, i.e. in the $ij$-components given as 
\begin{align*} 
  i\widehat{\theta}_{ij}=  -2 i\left(
\theta_{0i} V_j /c-\theta_{0j} V_i/c \right)N
-2i\theta_{ij}N^2.
\end{align*}
The operator $\widetilde{V}^{k}_{\,\,j}$ is the observable given by the commutator of the velocity with the Fourier transformed velocity, i.e. $ \widetilde{V}^{k}_{\,\,\,j} =[ \widetilde{V}^{k},V_j]$.
\end{theorem}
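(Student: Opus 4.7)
The strategy is to compute the ordinary commutator of the warped operators and then invoke Lemma \ref{dpl}, which gives $[A_\theta, B_\theta] = ([A\stackrel{\times_{\theta}}{,} B])_\theta$. Whenever the Rieffel commutator on the left commutes with the deformation generators $P^\mu$, its warping is trivial, so the deformed commutator coincides with the ordinary commutator of the warped operators. Every operator building block appearing on the right-hand side of the theorem ($N$, $V^i$, $\widetilde{V}^k_{\,\,j}$, $d\Gamma(Q^k\cdot |\mathbf{Q}|^{-1})N$) is either a function of momentum or a second-quantized operator whose commutator with $P^\mu$ is of a sufficiently mild form, so this short-cut is self-consistent.

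For the spatial section $[X^i\stackrel{\times_{\theta}}{,}X^j]$ I would use the exact formula from Lemma \ref{qftdefx} and split $X^i_\theta = X^i + Y^i_\theta$, where $Y^i_\theta := (\theta P)^0 V^i - (\theta P)^i N$. Since $Y^i_\theta$ is polynomial in the mutually commuting operators $P^\mu$, $V^k$ and $N$, one has $[Y^i_\theta,Y^j_\theta]=0$, and $[X^i, X^j]=d\Gamma([Q^i,Q^j])=0$, so the commutator collapses to $[X^i, Y^j_\theta] - [X^j, Y^i_\theta]$. Plugging in $[X^i, P_0] = iV^i$ (from differentiating (\ref{xip}) in $b_0$), $[X^i, P_k] = -i\eta^i_{\,\,k} N$ from (\ref{ccr}), together with $[X^i, N] = 0$, and regrouping the resulting terms by the $\theta_{\mu\nu}$-coefficients produces the announced expression $-2i(\theta_{0i}V_j/c - \theta_{0j}V_i/c)N - 2i\theta_{ij}N^2$, the factor $1/c$ being the one reinstated by keeping the speed of light explicit in $\omega_{\mathbf{p}}$.

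For the temporal-spatial component I would feed the first-order expansion $\alpha_{\theta k}(X_0) = X_0 - (\theta k)_\mu \widetilde{V}^\mu + \mathcal{O}(\theta^2)$ of Lemma \ref{utco} into the warped convolution defining $X_0^\theta$. The proof of Lemma \ref{lfpx1} already showed that within the defining oscillatory integral of the deformed product only the term linear in $\theta x$ survives the limit $\epsilon\to 0$: higher powers of $(\theta x)$ can be traded by integration by parts for derivatives of $\chi(\epsilon x,\epsilon y)$ that carry factors of $\epsilon$ and hence vanish. Consequently it suffices to commute the linearized $X_0^\theta = X_0 - (\theta P)_\mu \widetilde{V}^\mu$ with $X^j_\theta$ from Lemma \ref{qftdefx}. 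Using $[X_0, X^j]=d\Gamma([Q_0,Q_j])=0$, the identification $\widetilde{\mathbf{V}} = -d\Gamma(\mathbf{Q}\cdot |\mathbf{Q}|^{-1})$ from Lemma \ref{utco}, and the defining commutator $\widetilde{V}^k_{\,\,j} = [\widetilde{V}^k, V_j]$, one collects the two advertised contributions $i\theta_{0k}\widetilde{V}^k_{\,\,j}$ and $-2i\theta_{jk}d\Gamma(Q^k\cdot |\mathbf{Q}|^{-1})N$.

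The main obstacle is precisely the point just discussed: certifying that replacing the temporal component $\alpha_{\theta k}(X_0)$ by its first-order Taylor expansion is harmless inside the warped integrals defining $X_0\times_\theta X^j$ and $X^j\times_\theta X_0$. This justification rests on the oscillatory-integral mechanism already developed in the proof of Lemma \ref{lfpx1}; once it is invoked, what remains is a careful but routine bookkeeping exercise with Minkowski-index conventions and repeated use of the identity $[d\Gamma(A),d\Gamma(B)]=d\Gamma([A,B])$ on the bosonic Fock space to assemble the operator-valued matrix $i\widehat{\theta}_{\mu\nu}$ in its stated form.
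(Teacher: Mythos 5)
Your treatment of the spatial components is essentially the paper's own argument: compute $[X^{i}_{\theta},X^{j}_{\theta}]$ from the exact formula of Lemma \ref{qftdefx} and then observe that the de-warping is trivial because the outcome, built only from $\mathbf{V}$ and $N$, commutes with the momenta (the paper verifies $[V^{j},P^{\mu}]=0$ explicitly). The gap is in the $0j$-component. Your shortcut ``deformed commutator $=$ ordinary commutator of the warped operators'' requires that the Rieffel commutator, equivalently $[X_{0,\theta},X_{j,\theta}]=([X_0\stackrel{\times_{\theta}}{,}X_j])_{\theta}$, commute with the deformation generators $P^{\mu}$, and here it does not: $i\widehat{\theta}_{0j}$ is built from $\widetilde{V}^{k}$ and $d\Gamma(Q^{k}\cdot|\mathbf{Q}|^{-1})N$, which by Lemma \ref{utco} are second quantizations of functions of $\mathbf{Q}$ and therefore do not commute with the spatial momenta. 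Hence $(i\widehat{\theta}_{0j})_{\theta}\neq i\widehat{\theta}_{0j}$, and the commutator of the warped operators differs from the deformed commutator by genuine (order $\theta^{2}$) warping corrections; your blanket assertion that every building block has ``sufficiently mild'' commutators with $P^{\mu}$ is precisely where the argument breaks. A second, related gap: truncating $X_{0,\theta}=\int\alpha_{\theta x}(X_0)\,dE(x)$ at the term linear in $\theta$ is not justified. The mechanism you cite from Lemma \ref{lfpx1} kills higher powers of $(\theta x)$ only inside the two-variable oscillatory integral of the Rieffel product, because there they must pair against the $y$-dependence of $\alpha_{y}(X_j)$, which is exactly linear; the single warped convolution of $X_0$ contains no such pairing and retains all orders of the expansion.

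The paper avoids both problems by never passing through $X_{0,\theta}$ at all: it evaluates $X_0\times_{\theta}X_j$ and $X_j\times_{\theta}X_0$ directly in the oscillatory integral of Equation (\ref{dp0}), where the equal-order argument plus partial integration leaves only the first-order term of $\alpha_{\theta x}(X_0)$, so the resulting expression for $i\widehat{\theta}_{0j}$ is exact and no de-warping step is needed. As written, your route establishes the $0j$-formula only modulo $\mathcal{O}(\theta^{2})$ corrections, whereas the theorem asserts an exact identity; to repair it you would either have to control all orders of $\alpha_{\theta x}(X_0)$ in $X_{0,\theta}$ together with the nontrivial outer warping, or simply revert to the direct computation of the deformed products, which is the paper's proof.
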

\begin{proof}
To calculate the commutator of the deformed product we first calculate the deformed product of $X_0$ and $X_j$ on  $\Psi  \in D(\mathbf{Q})_{\otimes}^{+}$,
\begin{align*} 
(X_0\times_{\theta}X_j)\Psi&=(2\pi)^{- d} 
\iint dx\,dy \, e^{-ixy} \alpha_{\theta x}(X_0)\alpha_{y}(X_j)\Psi\\&=
(2\pi)^{- d} 
\iint dx\,dy \, e^{-ixy} \alpha_{\theta x}(X_0)\left(X_{j}
+y_{0}V_j
 - y_{j}N\right)\Psi \\
 &=
(2\pi)^{- d} 
\iint dx\,dy \, e^{-ixy}  (X_0- (\theta x)_\mu \widetilde{V}^{\mu} )\left(X_{j}
+y_{0}V_j
 - y_{j}N\right)\Psi \\&=(2\pi)^{- d} 
\iint dx\,dy \, e^{-ixy}  \left(X_0X_{j}- (\theta x)_\mu y_{0} \widetilde{V} ^{\mu}V_j
+(\theta x)_\mu  y_{j}\widetilde{V} ^{\mu}N\right)\Psi  \\&=
X_0X_j\Psi + (2\pi)^{- d} 
\iint dx\,dy \, e^{-ixy}  \left( i\frac{\partial}{\partial x^0}(\theta x)_\mu   \widetilde{V} ^{\mu}V_j
-i\frac{\partial}{\partial x^j}(\theta x)_\mu \widetilde{V} ^{\mu} N\right)\Psi 
 \\&=X_0X_j\Psi +i (2\pi)^{- d} 
\iint dx\,dy \, e^{-ixy}  \left( \theta_{\mu0}  \, \widetilde{V} ^{\mu}V_j
- \theta_{\mu j}\, \widetilde{V} ^{\mu} N\right)\Psi  \\&=\left(X_0X_j  +i    \theta_{\mu 0}  \, \widetilde{V} ^{\mu}V_j
-i \theta_{\mu j}\, \widetilde{V} ^{\mu} N\right)\Psi ,
\end{align*}
where in the last lines we used the fact that the only nonzero terms in the expansion are  the terms of
equal order in $(\theta x)$ and $y$. Moreover, we rewrote $y_{\mu}$ as a derivative on the exponential and performed a partial integration. 
To calculate the commutator we additionally need the deformed product in exchanged order, i.e. 
\begin{align*} i&(2\pi)^{- d} 
\iint dx\,dy \, e^{-ixy} \alpha_{\theta x}(X_j)\alpha_{y}(X_0)\Psi\\&=
(2\pi)^{- d} 
\iint dx\,dy \, e^{-ixy}
\left(X_{j}
+(\theta x)_{0}V_j
 - (\theta x)_{j}N\right)\left(X_0-y_\mu \widetilde{V} ^{\mu}\right)\Psi \\&=(2\pi)^{- d} 
\iint dx\,dy \, e^{-ixy}  \left(X_jX_{0}- (\theta x)_0 y_{\mu} V_j \widetilde{V} ^{\mu}
+(\theta x)_j  y_{\mu}N \widetilde{V} ^{\mu}\right)\Psi \\&=X_jX_{0} \Psi +(2\pi)^{- d} 
\iint dx\,dy \, e^{-ixy}  \left( i  \frac{\partial}{\partial x^\mu}(\theta x)_0  V_j \widetilde{V} ^{\mu}
-i \frac{\partial}{\partial x^\mu}(\theta x)_j   N \widetilde{V} ^{\mu}\right)\Psi \\&=X_jX_{0}\Psi  +i(2\pi)^{- d} 
\iint dx\,dy \, e^{-ixy}  \left( \theta_{0 \mu}  V_j \widetilde{V} ^{\mu}
-\theta_{j\mu}  N \widetilde{V} ^{\mu}\right)\Psi \\&=\left( X_jX_{0}   +i
  \theta_{0 \mu}  V_j \widetilde{V} ^{\mu}
-i\theta_{j\mu}  N \widetilde{V} ^{\mu}\right)\Psi .
\end{align*}
By using the last two calculations we obtain the  commutator of the deformed product which is given as,
\begin{align*} 
[X_0\stackrel{\times_{\theta}}{,} X_j] \Psi&= 
\left([X_0,X_j]  +i    \theta_{\mu 0}  \, [\widetilde{V} ^{\mu},V_j]
-i\theta_{\mu j}\, \widetilde{V} ^{\mu} N
+i\theta_{j\mu}  N \widetilde{V} ^{\mu}
\right)\Psi\\&= 
\left([X_0,X_j]  +i    \theta_{\mu 0}  \, [\widetilde{V} ^{\mu},V_j]
+i\theta_{0 j}\, [N,\widetilde{V} ^{0} ] 
+i\theta_{jk}  \{N, \widetilde{V} ^{k}\}
\right)\Psi,
\end{align*}
where in the last lines we used the  skew-symmetry of $\theta$ w.r.t. the Minkowski metric, i.e.
$\theta_{0j}=\theta_{j0}$, $\theta_{kj}=-\theta_{jk}$ and $\{. \,,\, . \}$ denotes the anticommutator.
Now the first commutator vanishes since,
\begin{align*} 
[X_0,X_j]=\Gamma(U^{-1}_{\mathscr{F}})\underbrace{[P_0,P_j]}_{=0}\Gamma(U_{\mathscr{F}})=0.
\end{align*}
Also note that $N$ commutes with all particle number conserving operators and since $\widetilde{V}$ is of the form $a^*a$ it commutes with $N$. Nevertheless, this statement can also be easily proven by using the canonical commutation relations (\ref{pccr}) and the form of the particle number operator  given in Equation (\ref{pcaopm}). Hence the only term that remains is the second one, and we calculate it in the following,
\begin{align*} [\widetilde{V} ^{\mu},V_j]&
=-i\iiint 
d^n \mathbf{p} \,d^n\mathbf{q}\,d^n\mathbf{k}\left(\frac{ (p-q)^\mu}{{\omega_{\mathbf{p}-\mathbf{q}}^{   d}}} \frac{k_{j}}{\omega_{\textbf{k}}}\right)\underbrace{ [{a}^{*}(\mathbf{p})   {a}(\mathbf{q}),   {a}^*(\mathbf{k})
{a}(\mathbf{k})]}_{-
\delta^n( \mathbf{p}-\mathbf{k})
{a}^*(\mathbf{k})
{a}(\mathbf{q})+\delta^n( \mathbf{q}-\mathbf{k})
{a}^{*}(\mathbf{p}){a}(\mathbf{k})
}\\&=-i
\iint 
d^n \mathbf{p} \,d^n\mathbf{q}
\left(\frac{ (p-q)^\mu}{{\omega_{\mathbf{p}-\mathbf{q}}^{   d}} }\left( \frac{q_{j}}{\omega_{\textbf{q}}}
-\frac{p_{j}}{\omega_{\textbf{p}}}
\right)
\right)
{a}^{*}(\mathbf{p}){a}(\mathbf{q})=:\widetilde{V} ^{\mu}_{\,\,j}.
\end{align*}
Hence by summing up all the commutators we obtain  
\begin{align*} 
[X_0\stackrel{\times_{\theta}}{,} X_j] \Psi&= \biggl(
  i    \theta_{\mu 0}  \, \widetilde{V} ^{\mu}_{\,\,j}
+2i\theta_{jk}   \widetilde{V} ^{k} N \biggr)\Psi\\&= \biggl(
   i    \theta_{0k}  \, \widetilde{V} ^{k}_{\,\,j}
+2i\theta_{jk}   \widetilde{V} ^{k} N \biggr)\Psi.
 \end{align*}
Next we turn our attention to the commutator of the deformed product of the spatial coordinate operators.  
 We use the algebra of the massless coordinate operators, the algebra of the momentum operators
and the fact that the particle number operator $N$ commutes with $X^{\mu}$ and $P^{\nu}$ to
calculate the commutator  of the deformed spatial operator. 
\begin{align*}
 [X_{i}^{\theta},X_{j}^{\theta}]\Psi&=   [ X_{i}, \left(\theta P \right)_{0}V_j]\Psi-[ X_{i}, \left(\theta P \right)_{j}N]\Psi-
i\leftrightarrow j
\\&=   \left(\theta P \right)_{0}[ X_{i}, V_j]\Psi+ [ X_{i}, \left(\theta P \right)_{0}]V_j\Psi
-[ X_{i}, \left(\theta P \right)_{j}]N\Psi-
i\leftrightarrow j
\\&  = -2
i\left(
\theta_{0i} V_j -\theta_{0j} V_i \right)N\Psi
-2i\theta_{ij}N^2\Psi
\end{align*}
In the last lines we used  the skew-symmetry of $\theta$ w.r.t. the
Minkowski metric. Note that the commutator of the deformed product of the spatial part is equal to the commutator of the deformed operator 
$\mathbf{X}_{\theta}$  since, 
 \begin{align*}
\left( [X^{j}_{\theta},X^{k}_{\theta}]\right)_{-\theta }\Psi=   [X^j\stackrel{\times_{\theta}}{,} X^k] \Psi,
\end{align*}
and the result of  $[X^{j}_{\theta},X^{k}_{\theta}]$ commutes with $\mathbf{P}$. To see this we only have to calculate the commutator of the momentum operator with $\mathbf{V}$ since $N$ commutes with $\mathbf{P}$.
 \begin{align*}
  [V^{j} ,P^{\mu} ]& = 
\iint d^n\mathbf{p}d^n\mathbf{q}\, \frac{p_{j}}{\omega_{\textbf{p}}}\,q^{\mu}\underbrace{ [{a}^{*}(\mathbf{p})   {a}(\mathbf{p}),   {a}^*(\mathbf{q})
{a}(\mathbf{q})]}_{-
\delta^n( \mathbf{p}-\mathbf{q})
{a}^*(\mathbf{q})
{a}(\mathbf{p})+\delta^n( \mathbf{p}-\mathbf{q})
{a}^{*}(\mathbf{p}){a}(\mathbf{q})
}\\&= 
\int d^n\mathbf{p}\, \left(-\frac{p_{j}}{\omega_{\textbf{p}}}\,p^{\mu}+\frac{p_{j}}{\omega_{\textbf{p}}}\,p^{\mu}\right){a}^{*}(\mathbf{p})   {a}(\mathbf{p})=0.
\end{align*}
In the last lines we used the canonical commutation relations (\ref{pccr})  and therefore we obtain,
 \begin{align*}
\left( [X^{j}_{\theta},X^{k}_{\theta}]\right)_{-\theta }\Psi=  [X^{j}_{\theta},X^{k}_{\theta}]\Psi=[X^j\stackrel{\times_{\theta}}{,} X^k]\Psi. 
\end{align*}
\end{proof}
It turns out that the relativistic correction terms appearing, for the spatial part, are of the order $v/c$ and hence not equal  to the standard  Moyal-Weyl. Even more striking is the commutator relation between the zero component and the spatial component. It depends in both terms on the Fourier transformed velocity and even on $\theta_{ik}$. This  suggests that setting $\theta_{0k}$ equal to zero, as  often done in order to simplify calculations, is no longer an option. Of course, it is  important to point out that in the one-particle and  nonrelativistic limit, i.e. $v/c<<1$, we obtain the standard Moyal-Weyl spacetime for the spatial section. 
 
\section{Conclusion and Outlook}
 
We obtained the QFT-Moyal-Weyl by taking the  route layed out in a QM-context. In QM the deformation of the coordinate operator gives us, after the identification of $\theta$ with the inverse of a magnetic field, the so called guiding center
coordinates. These coordinates  satisfy the commutator relations of the Moyal-Weyl Plane. In the Landau problem this plane is not merely
an abstract construct but has the physical meaning, that the space coordinates can not be
measured simultaneously.\\\\ Now at least the spatial part of the  QFT-Moyal-Weyl plane can be viewed as a  first order relativistic correction and second quantization of the Landau plane, after making the appropriate identifications, i.e. $\theta_{\mu \nu}=F^{-1}_{\mu \nu}$ for a constant electromagnetic field strength tensor $F$. This fact is first of all supported by working in  a relativistic second-quantized context. Secondly, in the one particle non-relativistic limit we obtain the well known Landau plane (see \cite{ Muc1}). Hence, the QFT - noncommutative plane seems to be an intermediate stepping stone from the non-relativistic one-particle Landau quantization to the second quantized relativistic one. However, the case of a fully relativistic Landau effect and an emerging quantum spacetime thereof should be studied in more detail. In this context deformation theory seems to be a suitable approach.
\\\\
The commutator relations of the temporal and spatial part shed a new light on the Moyal-Weyl plane. First of all, the $\theta_{0j}$ term comes with an observable which is the commutator of the velocity with the Fourier-transformed velocity. This observable is not equal to the Kronecker-delta, as one can easily see from the form of the expression. In a sense the observable resembles the canonical commutation relations due to the definitions of the velocities in the different spaces. Nevertheless, we are not aware of any appearance of such a term in  literature. Moreover, we have a second term in the commutator relations of $\hat{\theta}_{0j}$. This second term resembles the second quantization of a Lie-algebraic structure and is closely related to noncommutative spacetimes dubbed $\kappa$-Minkowski, \cite{AJ}. This can be seen clearer if we set $\theta_{0j}=0$ and apply the commutator relations on square-integrable   functions defined  on a sphere. Hence, turning the norm of $\mathbf{Q}$ into a constant  identified with the radius of the sphere.
\\\\
The second-quantization also plays a role in the appearance of the particle number operator. It is not clear if this is an integral part of the QFT-Moyal Weyl or merely an effect of the deformation. The outcome is connected to the strategy we have taken. If we would have first deformed our conjugate operators and then performed a second quantization we would not have any particle number appearing terms, except for the $N^2$-term which would be linear in $N$.  However, one could also define the conjugate operators with an $N^{-1}$ as was done in \cite{BDFP} for the mean event coordinate operator. The change will only be visible at the level of the commutator relations since $N$ commutes with all particle number preserving operators. The constant part of the spatial  QFT-Moyal Weyl would have no particle number appearing and hence not differ from the conventional Moyal-Weyl. Moreover, the velocity dependent term would  have the inverse of the particle number operator. Since all these possibilities have their theoretical  justification only an experiment or arguments from renormalization theory can settle the uniqueness of the QFT-Moyal Weyl.
\\\\
Our intuition in using the massless scalar field is guided by simplicity. Nevertheless, by using the massive free scalar field the commutation relations do not change their form. The only obvious exception are the velocity and the Fourier transformed velocity, which would be interchanged with their massive counterparts. However, in the massive case one cannot draw analogies to the $\kappa$-Minkowski spacetime.
\\\\
Since, the   QFT-Moyal-Weyl spacetime has some new features as the incorporation of the velocity and the particle number it would be interesting to study it  in the context of scattering.

 \bibliographystyle{alpha}
\bibliography{allliterature}

 \end{document}